\documentclass[11pt]{amsart}

\usepackage{amsmath, amssymb, amsthm, amscd}
\usepackage{url}
\usepackage{graphicx}
\usepackage[hidelinks,pagebackref,pdftex]{hyperref}
\usepackage{color}
\usepackage{import}
\usepackage[margin=3cm]{geometry}

\usepackage{enumerate}
\usepackage[linesnumbered]{algorithm2e}
\usepackage{amsfonts}

%
%

 %
 { \begin{list}%
         {$\bullet$}%
         {\setlength{\labelwidth}{20pt}%
          \setlength{\leftmargin}{25pt}%
          \setlength{\topsep}{0pt}
          \setlength{\itemsep}{0pt}
          \setlength{\parsep}{0pt}}}%
 { \end{list} }












\newcommand{\repseq}[3]{\xymatrix @-1pc{#1^{(#2)} \ar@{<->}[r] & \cdots \ar@{<->}[r] & #1^{(#3)} }}




\newcommand{\Ring}{\mathcal{R}}

\newcommand{\Field}{\mathbb{F}}



\newcommand{\Z}{\mathbb{Z}}
\newcommand{\R}{\mathbb{R}}
\newcommand{\Q}{\mathbb{Q}}

\newcommand{\X}{\mathbb{X}}

\newcommand{\modulo}{ \ \mathrm{mod}\ }

\newcommand{\low}{\mathrm{low}}

\newcommand{\Rmat}{\mathbf{R}}
\newcommand{\Matrix}{\mathbf{M}}
\newcommand{\col}{\mathsf{col}}



\newcommand{\V}{V}                        
\newcommand{\K}{\mathbf{K}}               



\newcommand{\Chain}{\mathbf{C}}


\newcommand{\Hom}{\mathbf{H}}
\newcommand{\Cycles}{\mathbf{Z}}
\newcommand{\Boundaries}{\mathbf{B}}





\newcommand{\wordlength}{\mathsf{w}} 
\newcommand{\Acpx}{\mathsf{A}}       

\newcommand{\bits}{\mathsf{B}}

\usepackage{pgfplots}
\usetikzlibrary{pgfplots.statistics}
\pgfplotsset{width=7cm,compat=1.14}

\numberwithin{equation}{section}
\theoremstyle{plain}
\newtheorem{theorem}[equation]{Theorem}

\newtheorem{lemma}[equation]{Lemma}

\newtheorem{corollary}[equation]{Corollary}
\newtheorem{proposition}[equation]{Proposition}
\newtheorem{definition}[equation]{Definition}
\newtheorem{conjecture}[equation]{Conjecture}

\begin{document}
\title{Computing Persistent Homology with Various Coefficient Fields in a Single Pass}

\author{Jean-Daniel Boissonnat} 
\author{Cl\'ement Maria}

\address{INRIA Sophia Antipolis-M\'editerran\'ee}
\email{jean-daniel.boisonnat@inria.fr, clement.maria@inria.fr}


\begin{abstract}
This article introduces an algorithm to compute the persistent homology of a filtered 
complex with various coefficient fields in a single matrix reduction. The algorithm is output-sensitive in 
the total number of \emph{distinct} persistent homological features in the diagrams for the different 
coefficient fields. This computation allows us to infer the prime divisors 
of the torsion coefficients of the integral homology groups of the topological space at any scale, 
hence furnishing a more informative description of topology than persistence in a single coefficient field. 
We provide theoretical complexity analysis as well as detailed experimental results. The code is 
part of the {\tt Gudhi} software library, and is available at~\cite{gudhi:PersistentCohomology}.

\bigskip

This article appeared in the Journal of Applied and Computational Topology 2019~\cite{DBLP:journals/jact/BoissonnatM19}. An extended abstract of this article appeared in the proceedings of the European Symposium on Algorithms 2014~\cite{DBLP:conf/esa/BoissonnatM14}.

\keywords{Persistent Homology \and Multi-Field Persistence \and Integral Homology \and Torsion \and Topology Inference}
\end{abstract}

\maketitle


\section{Introduction}
\label{sec:mf_intro}
Persistent homology~\cite{DBLP:books/daglib/0025666,DBLP:journals/dcg/ZomorodianC05} is an invariant measuring the topological features of the sublevel sets
of a function defined on a topological space. Its generality and stability~\cite{DBLP:journals/dcg/Cohen-SteinerEH07} with regard to noise have made it a widely used tool in applied topology. When considering homology with field coefficients---in opposition to integer coefficients---persistent homology admits an algebraic decomposition that can be represented by a \emph{persistence diagram}~\cite{DBLP:journals/dcg/ZomorodianC05}. 
The persistence diagram contains rich information about the topology of the studied space and very efficient methods exist to compute it. 
However, the integral homology groups of a topological space are strictly more informative than the homology groups with field coefficients, in particular because they convey information about \emph{torsion} in homology. Algebraically, torsion is characterized by cyclic subgroups of the integral homology groups, and appears in the range of application of computational topology, such as topological data analysis---where, for example, Klein bottles appear naturally~\cite{DBLP:journals/ijcv/CarlssonISZ08,martin2010top}---or the study of random complexes---where a burst of torsion subgroups of large order are found~\cite{2017arXiv171005683K}.

When homology is computed with field coefficients, these torsion subgroups may either vanish or contribute to the homology, depending on their (unknown) orders. This consequently obfuscate the study of the topology of data and complexes. A simple approach to distinguish between the two cases is to compute persistent homology with different coefficient fields and track the differences in the persistence diagrams. 

We build on this idea and describe an efficient algorithm to compute persistent homology with various coefficient fields $\Z/{q_1}\Z, \cdots, \Z/{q_r}\Z$ in a single pass of the matrix reduction algorithm. 
To do so, we introduce a method we call \emph{modular reconstruction} consisting of using the \emph{Chinese Remainder Isomorphism} to encode an element of $\Z/{q_1}\Z \times \cdots \times \Z/{q_r}\Z$ with an element of $\Z/{(q_1 \cdots q_r)}\Z$. This is a simple solution to implement a simple idea. However, it requires the introduction of technical tools for dedicated arithmetic operations, and the solution is tailored for persistent homology computations. 

Specifically, we describe algorithms to perform elementary row/column operations in a matrix with $\Z/{(q_1 \cdots q_r)}\Z$ coefficients, corresponding to simultaneous elementary row/column operations in $r$ distinct matrices with coefficients in the fields $\Z/{q_1}\Z, \cdots, \Z/{q_r}\Z$ respectively. The method results in an algorithm with an output-sensitive complexity in the total number of \emph{distinct} pairs in the echelon forms of the matrices with $\Z/{q_1}\Z, \cdots, \Z/{q_r}\Z$ coefficients, plus an overhead due to arithmetic operations on big numbers in $\Z/{(q_1 \cdots q_r)}\Z$. We present the method for computing persistent homology with several coefficient fields using the original persistence algorithm~\cite{DBLP:journals/dcg/EdelsbrunnerLZ02,DBLP:journals/dcg/ZomorodianC05}, but the methodology and generic tools developed may be applied to other persistent homology algorithms relying on elementary row/column operations, such as the persistent cohomology algorithms of~\cite{DBLP:journals/algorithmica/BoissonnatDM15,DBLP:journals/dcg/SilvaMV11,DBLP:conf/compgeom/DeyFW14}. Finally, we describe how to infer the torsion coefficients of the integral homology using the \emph{Universal Coefficient Theorem for Homology}, and how to integrate this information in a \emph{multi-field persistence diagram}, that could be used in application pipelines.

We discuss applications of the algorithm, and provide experimental analysis that on practical examples of interest, our method is significantly faster than the brute-force approach consisting in reducing separately $r$ matrices with coefficients in $\Z/{q_1}\Z, \cdots , \Z/{q_r}\Z$. It is important to note that the method does not pretend to scale to large $r$, as the arithmetic complexity of operations in $\Z/{(q_1 \cdots q_r)}\Z$ becomes problematic. 

Computing persistent homology with different coefficients has been mentioned in the literature~\cite{DBLP:journals/dcg/ZomorodianC05} in order to verify if a persisting feature was due to an actual ``hole'' (or high-dimensional equivalent) or to torsion (and consequently existed only for a certain coefficient field). The issues caused by homological torsion in the study of data using persistent homology is also discussed in~\cite{DBLP:journals/dcg/SilvaMV11}. 
To the best of our knowledge, this is the first work describing an efficient and practical algorithm to compute persistence with various coefficient fields in order to detect and analyse torsion subgroups in persistent homology.

\section{Background}
\label{sec:background}

For simplicity, we focus in the following on simplicial complexes and their homology. However, the approach and the algorithms do not rely on the simplicial structure, and apply to general complexes.

\subsection{Simplicial Homology with General Coefficients} 

We refer the reader to~\cite{Hatcher-algebraictopology2001} for an introduction to
homology and to~\cite{DBLP:books/daglib/0025666} for an introduction to persistent homology.

A {\em simplicial complex} $\K$ on a set of \emph{vertices} $\V = \{1, \cdots ,n\}$ is a collection of simplices $\{\sigma\}$, $\sigma \subseteq V$, such that $\tau \subseteq \sigma \in \K \Rightarrow \tau \in \K$. The dimension $d=|\sigma|-1$ of $\sigma$ is its number of elements minus $1$. For a ring $\Ring$, the group of $d$-chains, denoted by $\Chain_d(\K,\Ring)$, of $\K$ is the group of formal sums of $d$-simplices with $\Ring$ coefficients. The \emph{boundary operator} is a linear operator $\partial_d: \Chain_d(\K,\Ring) \rightarrow \Chain_{d-1}(\K,\Ring)$ such that $\partial_d \sigma =
\partial_d [v_0, \cdots , v_d] = \sum_{i=0}^d (-1)^{i}[v_0,\cdots ,\widehat{v_i}, \cdots,v_d]$, where $\widehat{v_i}$ means $v_i$ is deleted from the list. It will be convenient to consider later the  endomorphism  $\partial_*:\bigoplus_d \Chain_d(\K,\Ring) \rightarrow \bigoplus_d
\Chain_d(\K,\Ring)$ extended by  linearity to the external sum of  chain groups. Denote by $\Cycles_d(\K,\Ring)$ and $\Boundaries_{d-1}(\K,\Ring)$ the kernel and the image of $\partial_d$ respectively. Observing $\partial_d \circ \partial_{d+1}=0$, we define the $d^{th}$ homology group $\Hom_d(\K,\Ring)$ of $\K$ by the quotient $\Hom_d(\K,\Ring) = \Cycles_d(\K,\Ring)/\Boundaries_d(\K,\Ring)$. 

If $\Ring$ is the \emph{ring of integers} $\Z$, $\Hom_d(\K,\Z)$ is an abelian group and, according to the 
\emph{fundamental theorem of finitely generated abelian groups}~\cite{Hatcher-algebraictopology2001}, admits a 
\emph{primary decomposition}: 
\begin{equation}
\label{eq:group_decomposition}
\Hom_d(\K,\Z) \cong \Z^{\beta_d(\Z)} \bigoplus_{q \text{ prime}}
\left( \Z/{q^{k_1}}\Z \oplus \cdots \oplus
  \Z/{q^{k_{t(d,q)}}}\Z  \right)
\end{equation}
\noindent
for a uniquely defined integer $\beta_d(\Z)$, called the \emph{$d^{\text{th}}$ integral Betti number}, and integers $t(d,q) \geq 0$ and $k_i > 0$  for every prime number $q$. If $t(d,q)>0$, the integers $q^{k_1},\cdots,q^{k_{t(d,q)}}$ are called \emph{torsion coefficients}, and they admit $q$ as unique \emph{prime divisor}. Intuitively, in dimension $0$, $1$ and $2$, the integral Betti numbers count the number of connected components, the number of holes and the number of voids respectively. Torsion captures features such as non-orientability in surfaces ; see Section~\ref{app:mf_persistencediagram} and Figure~\ref{fig:diagram} for the example of the Klein bottle.  
If $\Ring$ is a \emph{field} $\Field$, $\Hom_d(\K,\Field)$ is a vector-space and decomposes into 
\[
\Hom_d(\K,\Field) \cong \Field^{\beta_d(\Field)}
\]
\noindent
where $\beta_d(\Field)$ is the \emph{$d^{\text{th}}$ field Betti number}.
The field Betti numbers $\left(\beta_d(\Field)\right)_d$ are entirely determined by the characteristic 
of $\Field$ and the integral homology ; see Section~\ref{sec:mf_uct}. Hence, the integral homology is more informative than homology in $\Field$. 

We suggest in Section~\ref{sec:mf_expe} the study of the $\Z$-homology of geometric data and random complexes. It is unclear how often integral homology is more informative that field homology in general geometric data, but important cases where torsion is fundamental in the study of data have been observed~\cite{DBLP:journals/ijcv/CarlssonISZ08,martin2010top}. The analysis of torsion is however fundamental in the study of random complexes~\cite{Luczak2018}.

\subsection{Persistent Homology with Field Coefficients} 

A \emph{filtration} of a complex is a function $f:\K \rightarrow \R$ satisfying $f(\tau) \leq f(\sigma)$ whenever $\tau \subseteq \sigma$. Ordering the simplices of $\K$ by strictly increasing $f$-value, we get an increasing sequence of complexes
\[
\emptyset = \K_0 \subsetneq \K_1 \subsetneq \cdots \subsetneq \K_{m-1}
\subsetneq \K_m = \K
\]
where all simplices in $\K_i \setminus \K_{i-1}$ have same filtration value. Without loss of generality, we suppose in the following that all $f$-values are distinct, and that successive complexes differ by exactly one simplex, i.e., $\K_i = \K_{i-1} \cup \{\sigma_i\}$. The {\em size} of a filtration is the number of simplices $m$ in the complex $\K$. 

A filtration induces a sequence of $d$-homology groups 
\[
0 = \Hom_d(\K_0,\Ring) \rightarrow \Hom_d(\K_1,\Ring) \rightarrow \cdots \rightarrow
\Hom_d(\K_m,\Ring) = \Hom_d(\K,\Ring)
\]
\noindent
connected by homomorphisms, induced by the inclusions. In the following, we denote simply by $\K$ the filtration $(\K,f)$. When $\Ring$ is a field, the latter sequence admits an algebraic decomposition that can be described in terms of a family of intervals $\{(i,j)\}$, called an \emph{indexed persistence diagram}, where a pair $(i,j)$ belongs to $\{1 \ldots m\} \times \{1 \ldots, m , \infty\}$ and is interpreted as a homology feature that \emph{is born} at index $i$ and \emph{dies} at index $j$ (homology features which never die have death $\infty$). Note than, for simplicial complexes, an index $i \in \{1 \ldots n\}$ belongs to exactly one pair (as birth or death) of the indexed persistence diagram. We assume this property true in the remainder of the article. 
For a fixed field of coefficients $\Field$, computing the persistent homology of a filtration consists of computing the persistence diagram of the induced sequence of $\Field$-homology groups. 
%

\section{Multi-Field Persistent Homology}
\label{sec:mf_motivation}

We call the algorithmic problem of computing persistent homology for a family of coefficient fields $\Z/q_1\Z, \ldots, \Z/q_r\Z$ \emph{multi-field persistent homology}. As explained in the next section, computing multi-field persistence allows us to infer a more informative description of the topology of a space, compared to persistence in a single field.

\medskip

\subsection{Inference of Torsion}
\label{sec:mf_uct}

For a topological space $\X$, the \emph{Universal Coefficient Theorem for Homology}~\cite{Hatcher-algebraictopology2001} establishes the relationship between the homology groups $\Hom_d(\X,\Z)$ with $\Z$ coefficients and the
homology groups $\Hom_d(\X,\Z/q\Z)$ with coefficients in the field $\Z/q\Z$ 
(of characteristic $q$), for $q$ prime. We use the following corollary:
\begin{corollary}[Universal Coefficient Theorem~\cite{Hatcher-algebraictopology2001}{[Corollary 3A.6.(b)]}]
  \label{cor:uct}
  Denote by $\beta_d(\Z)$ and $\beta_d(\Z/q\Z)$ the Betti numbers of
  $\Hom_d(\X,\Z)$ and $\Hom_d(\X,\Z/q\Z)$ respectively, and $t(j,q)$ the number of $\Z/{q^{k_i}}\Z$ summands in the primary decomposition of the homology group $\Hom_j(\X,\Z)$ as in Equation~(\ref{eq:group_decomposition}), we have:
  $$\beta_d(\Z/q\Z) = \beta_d(\Z)+t(d,q) + t(d-1,q)$$
\end{corollary}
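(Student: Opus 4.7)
The plan is to invoke the full Universal Coefficient Theorem for Homology (a stronger statement than the corollary cited), which supplies, for every abelian group $G$, a natural split short exact sequence
\[
0 \longrightarrow \Hom_d(\X,\Z) \otimes G \longrightarrow \Hom_d(\X,G) \longrightarrow \mathrm{Tor}(\Hom_{d-1}(\X,\Z), G) \longrightarrow 0,
\]
so that $\Hom_d(\X,G) \cong (\Hom_d(\X,\Z) \otimes G) \oplus \mathrm{Tor}(\Hom_{d-1}(\X,\Z), G)$. Setting $G = \Z/q\Z$ with $q$ prime, the left-hand side is a vector space over the field $\Z/q\Z$, so $\beta_d(\Z/q\Z)$ equals its $\Z/q\Z$-dimension, and the strategy reduces to computing the dimensions of the two summands on the right using the primary decomposition of Equation~(\ref{eq:group_decomposition}).

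First I would handle the tensor product. Distributing $\otimes\, \Z/q\Z$ over the primary decomposition and using $\Z \otimes \Z/q\Z \cong \Z/q\Z$ together with $\Z/p^k\Z \otimes \Z/q\Z \cong \Z/\gcd(p^k,q)\Z$, each free summand yields one copy of $\Z/q\Z$, while each primary summand $\Z/p^{k_i}\Z$ yields $\Z/q\Z$ when $p = q$ and $0$ otherwise. Hence $\Hom_d(\X,\Z) \otimes \Z/q\Z$ has $\Z/q\Z$-dimension $\beta_d(\Z) + t(d,q)$.

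Next I would compute the $\mathrm{Tor}$ summand using $\mathrm{Tor}(\Z, A) = 0$ and $\mathrm{Tor}(\Z/p^k\Z, \Z/q\Z) \cong \Z/\gcd(p^k, q)\Z$, applied to the primary decomposition of $\Hom_{d-1}(\X,\Z)$. Only the $t(d-1,q)$ summands of the form $\Z/q^{k_i}\Z$ contribute nontrivially, each yielding one copy of $\Z/q\Z$, giving a $\Z/q\Z$-dimension of $t(d-1,q)$. Adding the two counts yields the claimed identity $\beta_d(\Z/q\Z) = \beta_d(\Z) + t(d,q) + t(d-1,q)$. There is no serious obstacle once UCT is available; the only subtlety worth flagging is the bookkeeping that makes the torsion in dimension $d-1$ appear in the formula at dimension $d$, which is precisely the index shift built into the $\mathrm{Tor}$ term of UCT.
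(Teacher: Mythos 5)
Your proof is correct and is exactly the standard derivation behind the cited Corollary 3A.6(b) in Hatcher: apply the split UCT exact sequence with $G=\Z/q\Z$, then count $\Z/q\Z$-dimensions of the tensor and $\mathrm{Tor}$ terms via the primary decomposition of Equation~(\ref{eq:group_decomposition}). The paper itself states this result without proof (deferring to the reference), and your argument, including the dimension shift coming from the $\mathrm{Tor}(\Hom_{d-1}(\X,\Z),\Z/q\Z)$ term, fills that in correctly.
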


Suppose $\{q_1, \cdots ,q_r\}$ are the first $r$ prime
numbers and $q_r$ is a strict upper bound on the prime divisors of the
torsion coefficients of $\X$. Consequently, according to Corollary~\ref{cor:uct}, $\beta_d(\Z/q_r\Z) = \beta_d(\Z)$ for all dimensions $d$. Moreover, there is no torsion in $0$-homology~\cite{Hatcher-algebraictopology2001}, and 
$t(0,q)=0$ for all primes $q$. Given the Betti numbers of $\X$ in all
fields $\Z/q_s\Z, 1\leq s \leq r$, we deduce from Corollary~\ref{cor:uct} the
recurrence formula $t(d,q_s) = \beta_d(\Z/q_s\Z) - \beta_d(\Z/q_r\Z) - t(d-1,q_s)$, from which we compute the value of $t(d,q)$ for every dimension $d$ and prime $q$. For any dimension $d$, we consequently infer the integral Betti numbers and the number $t(d,q)$ of $\Z/{q^{k_i}}\Z$ summands in the primary decomposition of $\Hom_d(\X,\Z)$. 

\medskip

It is important to notice two limitations of this approach. First, the universal coefficient theorem does not allow us to infer powers $k_i$ from the summands $\Z/q_i^{k_i} \Z$ in the decomposition of the homology groups with $\Z$-coefficient, as in Equation~(\ref{eq:group_decomposition}), by computing homology with field coefficients. Consequently, a summand $\Z/q^{k_i} \Z$ is detected as a summand $\Z/q^{*} \Z$, for an unknown power of $q$.
Second, determining an upper bound $q_r$ on the prime divisors of the torsion coefficients of a complex is a difficult task in general. However, computing separately persistent homology with $\Q$-coefficients provides the Betti numbers $\beta_d(\Q)$ that are equal to $\beta_d(\Z)$, and can be used in the formula of Corollary~\ref{cor:uct}. This allows us to detect correctly the summands $\Z/q^{k_i} \Z$ for all $q \leq q_r$, even when $q_r$ is not an upper bound on the prime divisors of the torsion coefficients.

We discuss the question of upper bounds of prime divisors of torsion coefficients in the experimental Section~\ref{sec:mf_expe} for different types of data sets.

\subsection{Representation of the Multi-Field Persistence Diagram} 
\label{app:mf_persistencediagram}

Persistence diagrams are represented by sets of points in the plane, where to every persistent pair $(i,j)$ of the diagram corresponds a point with coordinates $(i,j)$ in the plane ; see Figure~\ref{fig:diagram}. We generalize this representation to multi-field persistence diagram by plotting the \emph{superimposition} of the persistence diagrams in each coefficient field, and by infering an expression of the integral homology group in each cell of the diagram. 

We refer to Figure~\ref{fig:diagram} for an example. It pictures the multi-field persistence diagram of the $1$-homology of a filtration $\K$ approximating a Klein bottle (for field coefficients $\Z/2\Z$ and $\Z/3\Z$). The integral $1$-homology of the Klein bottle is $\Hom_1(\K,\Z) = \Z \oplus \Z/2\Z$, and $\Hom_1(\K,\Z/2\Z) = (\Z/2\Z)^2$ and $\Hom_1(\K,\Z/3\Z) = \Z/3\Z$, and the integral homology appears clearly in the multi-field persistence diagram.

Notion of distances, such as {\em bottleneck distance} and \emph{Wasserstein distance}, and stability~\cite{DBLP:journals/dcg/Cohen-SteinerEH07}, extend naturally to this presentation, by defining the distance between two multi-field persistence diagram for coefficients $\Z/q_1, \ldots , \Z/q_r\Z$ as the maximal distance between the corresponding standard persistence diagrams over all coefficient fields $\Z/q_s\Z$, $1 \leq s \leq r$.

\begin{figure}[t!]
\centering
  \includegraphics[width=6cm]{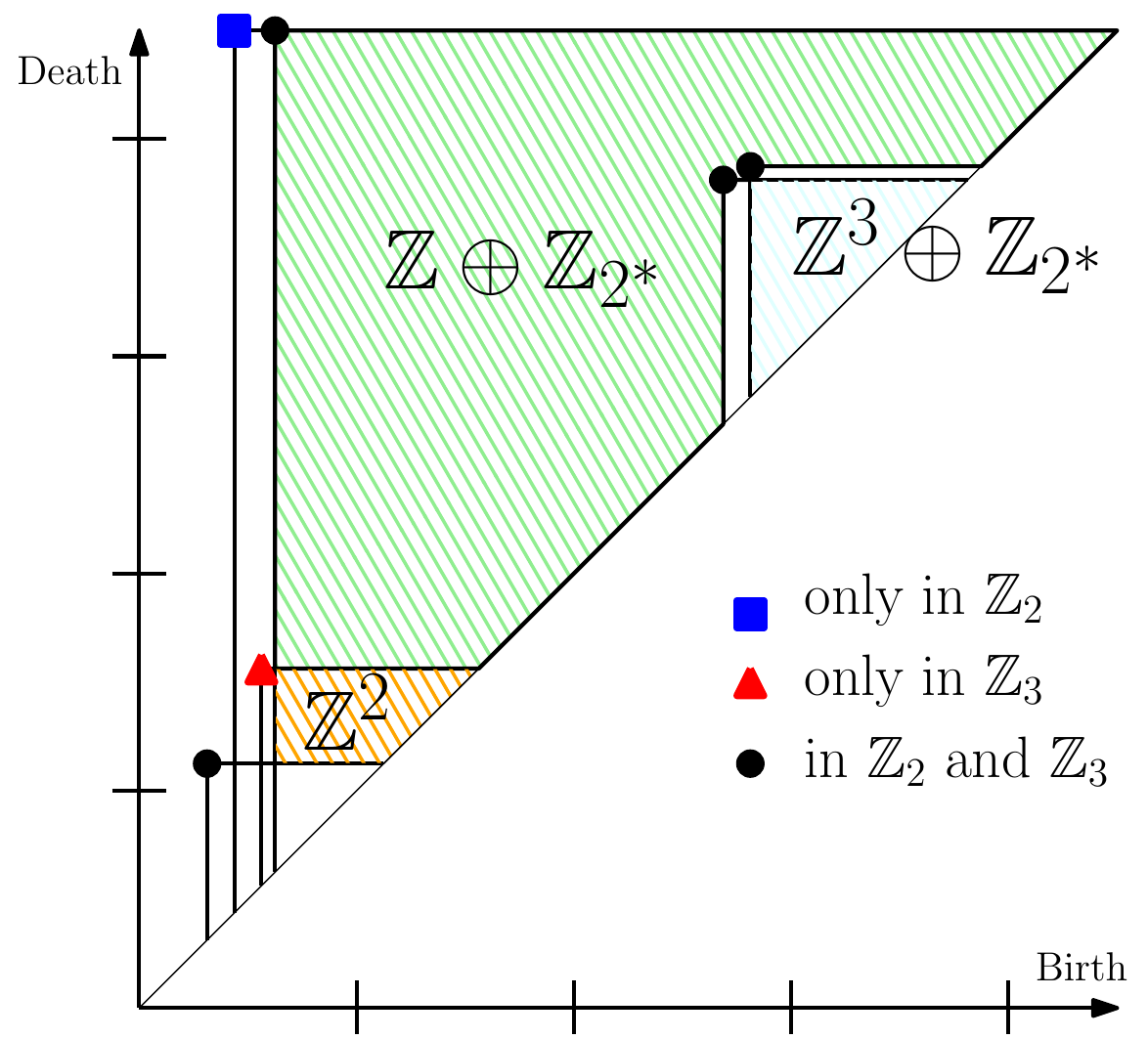}
   \caption{Multi-field persistence
     diagram of the most persistent features of $\Hom_1$ for a Rips complex
     reconstructing a Klein bottle. The ``$*$'' in $\Z/{2^*}\Z$ indicates that the persisting homology admits a torsion summand $\Z/{2^k}\Z$ for some unknown $k \geq 1$. The $1$-homology $\Z \oplus \Z/2\Z$ of the underlying Klein bottle appears clearly as persisting.}
  \label{fig:diagram}
\end{figure}

\section{Algorithm for Multi-Field Persistent Homology}
\label{sec:mf_algo}

In this section we design an efficient algorithm to compute multi-field persistent homology. For a filtered complex $\K$ of size $m$, denote by $P_{\Field}$ the number of pairs of indices $(i,j) \in \{1, \ldots , m\} \times \{1, \ldots , m, \infty\}$ forming the index persistence diagram with coefficients in a field $\Field$. For a set of coefficient fields $\Z/q_1\Z, \ldots, \Z/q_r\Z$, denote by $P_r$ the number of \emph{distinct} pairs of indices appearing the persistence diagram of $\K$ for every coefficient field $\Z/q_s\Z$.

We design an algorithm, called \emph{modular reconstruction} algorithm, of complexity
\[
O\left( \left[ r \times (P_r-P_\Field) + P_r^3 \right] \times \Acpx_r\right)
\]
where $\Acpx_r$ is a bound on the time complexity of arithmetic operations on large integers in $\Z/(q_1 \cdots q_r)\Z$ (see Section~\ref{sec:arithcpx}), and the $P_r^3$ stands for the standard cubic complexity of computing persistent homology. Note that the additional component $r \times (P_r-P_\Field)$ depends on the number of distinct bars in the persistence diagram when changing coefficient fields which, in light of Section~\ref{sec:mf_uct}, is directly related to torsion. In that sense, the algorithm is output-sensitive.

\smallskip

For clarity, we focus in this section on the persistent homology algorithm as presented in~\cite{DBLP:books/daglib/0025666}[Chapter VII], which consists of a reduction to column echelon form (defined later) of a matrix. All practical persistent homology algorithms rely on atomic matrix column operations. Our approach to multi-field persistence is described in terms of these column operations, and can consequently be adapted to other practical persistent homology implementations. In the following, $\Z/n\Z$ denotes the ring $(\Z/n\Z,+,\times)$ for any integer $n \geq 1$ and $(\Z/n\Z)^\times$ the subset of invertible elements for $\times$. 
If it exists, we denote the inverse of $x \in \Z/n\Z$ by $x^{-1}$.

\subsection{Persistent Homology Algorithm} 

In this section we recall the standard matrix algorithm to compute persistent homology with coefficient in a field~\cite{DBLP:books/daglib/0025666}[Chapter VII]. 

\smallskip

For an $m \times m$ matrix $\Matrix$, denote by $\col_j$ the 
$j^{\text{th}}$ column of $\Matrix$, $1 \leq j \leq m$, and denote by $\col_j[k]$ the $k^{\text{th}}$ entry of the column. Let $\low(j)$ denote the row index of the lowest non-zero entry of $\col_j$. If the column $j$ is entirely zero, $\low(j)$ is undefined. We say that $\Matrix$ is in \emph{reduced column echelon form} if, for any two non-zero columns $\col_j$ and $\col_{j'}$, $j \neq j'$, the columns satisfy $\low(j) \neq \low(j')$. 

Let $\K = (\sigma_i)_{i=1 \cdots m}$ be a filtered complex. For a fixed coefficient field $\Field$, its boundary matrix $\Matrix_\partial$ is the $m \times m$ matrix, with $\Field$ entries, of the endomorphism $\partial_{*}$ in the basis $\{\sigma_1, \cdots , \sigma_m\}$ of $\bigoplus_d \Chain_d(\K,\Field)$. The basis is ordered according to the filtration. It is a matrix with $\{-1,0,1\}$ entries, where $0$ and $1$ denote the identity $0_{\Field}$ for $+$ and the identity $1_{\Field}$ for $\times$ in $\Field$ respectively, and $-1$ is the inverse of $1_{\Field}$ in $\Field$.
The persistent homology algorithm consists of a left-to-right reduction to column echelon form of $\Matrix_\partial$, presented in Algorithm~\ref{algo:ph}. We denote by $\Rmat$ the matrix we reduce, with columns $\col_j$, and which is initially equal to $\Matrix_\partial$. The algorithm returns the \emph{(indexed) persistence diagram}, which is the set of pairs $\{(\low(j),j)\}$ in the reduced column echelon form of the matrix. Note that the ``infinite intervals'' of the diagram can be infered by reading the null columns of the reduced matrix, and for simplicity we do not include this computation in the pseudo-code.

\begin{algorithm}[t]
\SetAlgoLined
\KwData{Boundary matrix $\Rmat \leftarrow \Matrix_\partial$, persistence diagram $\mathcal{P}_\Field \leftarrow \emptyset$}
\KwOut{Persistence diagram $\mathcal{P}_\Field = \{(i,j)\}$}
\For{$j=1, \cdots ,m$}{
    \While{there exists $j' < j$ with $\low(j') = \low(j)$}{
        $k \leftarrow \low(j)$\;
        $\col_j \leftarrow \col_j - \left(\col_j[k] \times \col_{j'}[k]^{-1} \right) \cdot \col_{j'}$\;
    }
    \lIf{$\col_j \neq 0$}{$\mathcal{P}_\Field \leftarrow \mathcal{P}_\Field \cup \{(\low(j),j)\}$}
}
\caption{Persistent homology algorithm.}
\label{algo:ph}
\end{algorithm}

The reduced form of the matrix is not unique, but the pairs $(i,j)$ such that $i = \low(j)$ in the 
column echelon form are~\cite{DBLP:books/daglib/0025666}. 
The algorithm requires $O(m^3)$ arithmetic operations in $\Field$.
\subsection{Modular Reconstruction for Elementary Matrix Operations}
Denote by $[r]$ the set of integers $\{1, \cdots ,r\}$. For a
family of $r$ distinct prime numbers $\{q_1, \cdots ,q_r\}$, and a subset
of indices $S \subseteq [r]$, $Q_S$ refers to the product $\prod_{s \in S} q_s$, and we write simply $Q := Q_{[r]}$. For any integer $z \in \Z$ and positive integer $n > 0$, $z \modulo n$ refers to the equivalence class of $z$ in $\Z/n\Z$. For
simplicity, any element $x \in \Z/n\Z$ is identified with the smallest positive integer belonging to the class $x$ in $\Z/n\Z$. We also denote this integer by $x \in \Z$, $0\leq x <m$. Consequently, for $x \in \Z/n\Z$, $x \modulo n'$ refers to the class of $\Z/{n'}\Z$ to which belongs the integer $x \in \Z$, and $(\modulo n')$ can be seen as a ring homomorphism $\Z/n\Z \rightarrow \Z/{n'}\Z$.

\smallskip

We present a particular case of the \emph{Chinese Remainder Theorem}, and recall a simple constructive proof.

\begin{theorem}[Chinese Remainder Theorem~\cite{Gathen:2003:MCA:945759}]
For a family $\{q_1, \cdots ,q_r \}$ of $r$ distinct prime numbers, there exists a ring isomorphism 
\[
\psi: \Z/q_1\Z \times \cdots \times \Z/q_r\Z \rightarrow \Z/{(q_1 \cdots q_r)}\Z
\] 
The isomorphisms $\psi$ and $\psi^{-1}$ can be computed in $O(r)$ arithmetic operations in $\Z/{(q_1 \cdots q_r)}\Z$.
\end{theorem}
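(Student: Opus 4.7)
The plan is to produce $\psi$ explicitly as a $\Z/Q\Z$-linear combination of ``indicator'' constants, verify that it inverts the obvious componentwise reduction map, and then read off the arithmetic cost. For each $s \in [r]$, set $Q_s := Q/q_s = \prod_{t \neq s} q_t$. Because the $q_s$ are distinct primes, $\gcd(Q_s, q_s) = 1$, so a single call to the extended Euclidean algorithm produces $u_s \in \Z/q_s\Z$ with $u_s Q_s \equiv 1 \pmod{q_s}$. Setting $e_s := u_s Q_s \bmod Q$ gives the key identity $e_s \equiv \delta_{st} \pmod{q_t}$ for every $s,t \in [r]$, and the candidate map is $\psi(x_1, \ldots, x_r) := \sum_{s=1}^{r} x_s\, e_s \bmod Q$.

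To prove $\psi$ is a ring isomorphism, I would compose it with the manifestly well-defined ring homomorphism $\phi : \Z/Q\Z \to \Z/q_1\Z \times \cdots \times \Z/q_r\Z$ given by $\phi(x) = (x \bmod q_1, \ldots, x \bmod q_r)$. The $\delta_{st}$ property of the $e_s$ immediately yields $\phi \circ \psi = \id$, so $\phi$ is surjective. Injectivity of $\phi$ is equally quick: if $x \bmod q_s = 0$ for every $s$, then every $q_s$ divides $x$, and since the $q_s$ are pairwise coprime their product $Q$ divides $x$, so $x = 0$ in $\Z/Q\Z$. Both rings have cardinality $Q$, hence $\phi$ is a bijection and $\psi = \phi^{-1}$; since the inverse of a bijective ring homomorphism is automatically a ring homomorphism, this establishes the isomorphism.

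For complexity, I would assume the constants $e_1, \ldots, e_r$ have been precomputed once and for all (amortising the $r$ Bézout steps together with the prefix/suffix products needed to build the $Q_s$ in $O(r)$ operations). Under that assumption, evaluating $\psi$ costs exactly $r$ multiplications and $r-1$ additions in $\Z/Q\Z$, i.e.\ $O(r)$ arithmetic operations, while evaluating $\phi$ costs $r$ reductions modulo the $q_s$, also $O(r)$. The main obstacle is conceptual rather than mathematical: one must be clear that ``arithmetic operation in $\Z/Q\Z$'' refers to a single $+$, $\times$, or modular reduction counted at unit cost, whereas the bit-complexity of those operations (which grows with $r$) is isolated into the separate factor $\Acpx_r$ appearing in the overall persistence bound. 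Once this accounting convention is fixed, the verification itself is routine.
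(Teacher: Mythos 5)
Your proof is correct and follows the same overall structure as the paper's: construct ``indicator'' elements $e_s$ satisfying $e_s \equiv \delta_{st} \pmod{q_t}$, define $\psi$ as the linear combination $\sum_s x_s e_s \bmod Q$, and take componentwise reduction as $\psi^{-1}$. The one genuine difference is how the indicators are obtained: you invert $Q_s = Q/q_s$ modulo $q_s$ via the extended Euclidean algorithm, whereas the paper invokes Euler's (Fermat's) theorem and sets $\nu_s = (Q_{[r]\setminus\{s\}})^{q_s-1} \bmod Q$; both recipes produce the same element of $\Z/Q\Z$, since it is pinned down by its residues modulo each $q_t$. Your version is arguably more complete than the paper's, which merely exhibits the formulas: you explicitly verify $\phi\circ\psi=\id$, prove injectivity of $\phi$ from pairwise coprimality, and close the argument with a cardinality count, and your accounting of the $O(r)$ cost (precomputed constants, then $r$ multiplications and $r-1$ additions, with bit-complexity deferred to $\Acpx_r$) matches the convention the paper uses later in Section~\ref{sec:arithcpx}.
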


\begin{proof}
Euler's theorem states that for two coprime integers $a$ and $n$, $a^{\varphi(n)} \modulo n \allowbreak= 1$, where $\varphi$ is Euler's totient function, which is equal to $\varphi(q)=q-1$ on a prime integer $q$. For $1 \leq s \leq r$, define $\nu_s = (Q_{[r]\setminus\{s\}})^{q_s-1} \modulo Q$. 

For all $1 \leq s \leq r$, there consequently exist integers $\nu_s$ such that $\nu_s \modulo q_t = 1$ if $s=t$ and $0$ otherwise. 
The following expressions of $\psi$ and $\psi^{-1}$ realize the isomorphism of the theorem:
\[
\begin{array}{cccc}
        & \Z/q_1\Z \times \cdots \times \Z/q_r\Z& \leftrightarrow & \Z/(q_1 \cdots q_r)\Z\\
 \psi: &  (u_1, \cdots ,u_r)& \mapsto & \left( u_1 \nu_1+ \cdots +u_r \nu_r \right) \modulo Q\\
 \psi^{-1}:& (x \modulo q_1 , \cdots , x \modulo q_r) &\leftarrow & x \\
\end{array}\] 
\end{proof}

In the following, we consider the isomorphism of the former proof
when referring to the isomorphism given by the Chinese Remainder Theorem.
We denote by $\psi_S$ the function $\psi_S: \prod_{s \in S} \Z/q_s\Z \rightarrow \Z/{Q_S}\Z$ realizing the isomorphism of the Chinese Remainder Theorem for the subset $\{q_s\}_{s\in
S}$, $S \subset [r]$, of prime integers, and we write simply $\psi$ for $\psi_{[r]}$. For a family of elements $u_s \in \Z/q_s\Z, s \in S$, we denote the corresponding $|S|$-tuple $(u_s)_{s\in  S} \in \prod_{s \in S} \Z/q_s\Z$. 
Finally, we recall \emph{Bezout's lemma}~\cite{Gathen:2003:MCA:945759}.

\begin{lemma}[Bezout]
For two integers $a$ and $b$, not both $0$, there exist integers $v$ and $w$ such that $va + wb = \gcd(a,b)$, the greatest common divisor of $a$ and $b$, with $|v|<|b/\gcd(a,b)|$ and $|w|<|a/\gcd(a,b)|$. 

The Bezout's coefficients $(v,w)$ can be computed with the extended Euclidean 
algorithm~\cite{Gathen:2003:MCA:945759}.
\end{lemma}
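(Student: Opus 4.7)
The plan is to prove Bezout's lemma constructively via the extended Euclidean algorithm. First I would reduce to the case $a, b > 0$ by absorbing signs into the coefficients $v$ and $w$, and dispose of the trivial case where one of $a, b$ is zero directly. Then I would consider the classical remainder sequence $r_0 = a$, $r_1 = b$, $r_{i+1} = r_{i-1} - q_i r_i$ with $q_i = \lfloor r_{i-1}/r_i \rfloor$, which terminates at some step $n+1$ with $r_{n+1} = 0$ and $r_n = \gcd(a,b)$.

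Next I would maintain, in parallel with the $r_i$, integer coefficients $(v_i, w_i)$ satisfying the invariant $v_i a + w_i b = r_i$, initialized by $(v_0, w_0) = (1, 0)$ and $(v_1, w_1) = (0, 1)$ and updated by the same linear recurrence $(v_{i+1}, w_{i+1}) = (v_{i-1} - q_i v_i,\, w_{i-1} - q_i w_i)$. A short induction on $i$ shows the invariant persists, so at termination $(v_n, w_n)$ provides integer coefficients with $v_n a + w_n b = \gcd(a,b)$. This simultaneously establishes both the existence claim and the computational claim, since the coefficient recurrence runs for exactly as many steps as the Euclidean algorithm itself.

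For the size bounds $|v| < |b/\gcd(a,b)|$ and $|w| < |a/\gcd(a,b)|$, the plan is to normalize the output pair. Writing $d = \gcd(a,b)$ and $a = d a'$, $b = d b'$ with $\gcd(a', b') = 1$, any two Bezout pairs differ by an integer solution of $va + wb = 0$; since $\gcd(a', b') = 1$, the full set of Bezout pairs is the affine line $\{(v_0 + k b',\, w_0 - k a') : k \in \Z\}$. Choosing $k$ so that $0 \leq v < b'$ yields $|v| < b/d$ directly, and substituting back into $va + wb = d$ locates $w$ in the half-open interval $(d/b - a',\, d/b]$, which in the typical regime $d < b$ forces $|w| < a'$.

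The main obstacle I anticipate is the handful of degenerate cases where the strict bounds barely fail, for instance when $a = b$ (so $a' = b' = 1$) or when one of $a, b$ divides the other with quotient $1$. In these corner cases one of the bounds degenerates to ``$|v| < 1$'' and cannot be attained nontrivially; I would handle them either by exhibiting explicit solutions where at least one of $v, w$ vanishes, or by a slight shift of the residue interval for $v$ so that $w$ falls strictly inside the target range. Apart from these edge adjustments, both correctness and the size bounds follow from standard analysis of the extended Euclidean algorithm~\cite{Gathen:2003:MCA:945759}.
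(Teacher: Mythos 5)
The paper does not actually prove this lemma --- it is quoted as a classical fact with a citation to von zur Gathen--Gerhard, so there is no in-paper argument to compare against. Your proposal is the standard constructive proof (remainder sequence plus the parallel coefficient recurrence $(v_{i+1},w_{i+1})=(v_{i-1}-q_iv_i,\,w_{i-1}-q_iw_i)$, then normalization of the Bezout line $\{(v_0+kb',\,w_0-ka')\}$ to get the size bounds), and the existence and computability parts are sound. Your interval analysis for the bounds is also essentially right: choosing $0\le v<b'$ gives $|v|<b/d$, and the substitution places $w$ in $(d/b-a',\,d/b]$, which yields $|w|<a'$ whenever $a'\ge 2$, with the $a'=1$ (i.e., $a\mid b$) case handled by the explicit pair $(1,0)$ when $b>a$.

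The one point you cannot repair is the case $|a|=|b|$ (e.g., $a=b=1$, or more generally $a=\pm b$): there $a'=b'=1$ and the required bounds $|v|<1$, $|w|<1$ force $v=w=0$, so $va+wb=0\ne\gcd(a,b)$ and the lemma as literally stated is false --- no "slight shift of the residue interval" or explicit solution can save it. This is a defect of the statement rather than of your proof; the textbook versions either assume $a$ and $b$ are not associates, or state the (non-strict, and in fact stronger) bounds $|v|\le |b|/(2\gcd(a,b))$, $|w|\le |a|/(2\gcd(a,b))$ for the coefficients actually produced by the extended Euclidean algorithm. In the paper's only use of the lemma (Proposition~\ref{prop:partialinv}, applied to $x$ and $Q_T$ with $\gcd(x,Q_T)=1$), only the existence of $v$ with $v x\equiv 1 \pmod{Q_T}$ is needed, so the degenerate case is harmless there; but you should flag it rather than claim every corner case can be patched.
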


\medskip

\noindent
{\bf Elementary Column Operations.} We are given a family of distinct prime
numbers $\{q_1,\cdots,q_r\}$, and their product $Q = q_1 \cdots q_r$. Let $\Matrix_Q$ be a matrix with entries in the ring $\Z/Q\Z$. Denoting by $\psi^{-1}: \Z/Q\Z \rightarrow \Z/q_1\Z\times \cdots \times \Z/q_r\Z$ the isomorphism of the Chinese Remainder Theorem, and $\pi_s:\Z/q_1\Z\times \cdots \times \Z/q_r\Z \rightarrow \Z/q_s\Z$ the projection on the $s^{\text{th}}$ coordinate, we call \emph{projection of $\Matrix_Q$ onto $\Z/q_s\Z$}, denoted $\Matrix_Q(\Z/q_s\Z)$, 
the matrix with entries in $\Z/q_s\Z$, obtained by applying $\pi_s \circ \psi^{-1}$ pointwise to each entry of $\Matrix_Q$. 

Conversely, given a number $r$ of $(m \times m)$-matrices $\Matrix_{q_1}, \cdots, \Matrix_{q_r}$ with coefficients in $\Z/q_1\Z , \cdots, \Z/q_r\Z$ respectively, there exists a unique matrix $\Matrix_Q$ with $\Z/Q\Z$ entries such that, for every index $s$ in a prime number $q_s$, $\Matrix_Q$ satisfies $\Matrix_Q(\Z/q_s\Z) = \Matrix_{q_s}$. This is simply a matrix version of the Chinese Remainder Theorem. 

\emph{Elementary column operations} on a matrix $\Matrix$ with entries in a ring $\Ring$ are of three kinds:

\smallskip

\begin{itemize}
\item[(i)] exchange $\col_k$ and $\col_\ell$,
\item[(ii)] multiply $\col_k$ by $-1 \in \Ring$,
\item[(iii)] replace $\col_k$ by $(\col_k + \alpha \times \col_\ell$), for a $\alpha \in \Ring$.
\end{itemize}

\smallskip

For an elementary column operation $(*)$ (i.e., an operation of type (i), (ii) or (iii) applied to some columns of the matrix), we denote by $(*) \circ \Matrix$ the result of applying $(*)$ to $\Matrix$. Any reduction algorithm relies on these three operations. A key feature of the persistent homology reduction is the ability to inverse elements when reducing a matrix with \emph{field} coefficients (applying column operation (iii) in line 4 of the Algorithm~\ref{algo:ph}). 

In the following we introduce algorithms to run elementary column operations simultaneously on matrices $\Matrix_{q_1}, \cdots, \Matrix_{q_r}$ with coefficients in the fields $\Z/q_1\Z , \cdots, \Z/q_r\Z$ respectively, by performing \emph{partial column operations} on matrix $\Matrix_Q$ with coefficient in the ring $\Z/Q\Z$, such that the $\Matrix_{q_j}$ and $\Matrix_Q$ are related by the Chinese Remainder Theorem as above.

Specifically, for an elementary column operation $(*)$, on column $k$, and $\ell$, and with scalar $\alpha \in \Z/Q\Z$, and a subset of indices $S \subseteq [r]$, we call \emph{partial column operation}, denoted by $(*)_S$, on $\Matrix_Q$ the operation transforming $\Matrix_Q$ into $\Matrix'_Q = (*)_S \circ \Matrix_Q$ satisfying: 
\[
\Matrix'_Q \ \text{satisfies} \ 
\left\{ \begin{array} {ll}
\Matrix_Q'(\Z/q_s\Z) = (*) \circ \Matrix_{q_s} & \text{if} \ s \in S,\\
\Matrix_Q'(\Z/q_s\Z) = \Matrix_{q_s} & \text{otherwise}.
\end{array}\right.
\]
where $(*)$ on $\Matrix_{q_s}$ is on column $k$, and $\ell$, and with scalar $\pi_s \circ \psi^{-1}(\alpha)$ in $\Z/q_s\Z$. 
The correspondence $\psi:\Z/q_1\Z \times \cdots
\times \Z/q_r\Z \rightarrow \Z/Q\Z$ is a ring
homomorphism, i.e., it satisfies: 
\[
\psi(u_1,\cdots,u_r)+\psi(v_1,\cdots,v_r)\times \psi(w_1,\cdots,w_r) =
\psi(u_1+v_1\times w_1,\cdots,u_r+v_r\times w_r)
\]
Consequently, we can compute additions and multiplications componentwise in
$\Z/q_1\Z \times \cdots \times \Z/q_r\Z$ using addition and
multiplication in $\Z/Q\Z$. 

\smallskip

In order to compute partial column operations, we first introduce the set of \emph{partial identities}, which are coefficients that allow us to proceed to the partial column operations of type (i) and (ii). 
Secondly, as the rings $\Z/q_s\Z$ are fields, we need to compute the
multiplicative inverse of an element, that is used as multiplicative
coefficient $\alpha$ in elementary column operation (iii). As $\Z/Q\Z$ is not a field, inversion is not possible, and we introduce the concept of \emph{partial inverse} to overcome this difficulty. In the following, the term ``arithmetic operation'' refers to any operation $+$, $-$, $\times$, $\gcd(\cdot,\cdot)$, $(\cdot \mod Q_S)$, and Extended Euclidean algorithm on integer smaller than $Q$. Note they do not have constant time complexity for large $Q$. We discuss arithmetic complexity in Section~\ref{sec:arithcpx}.


\medskip

{\bf Partial Identity and Partial Inverse.} Given a subset of indices $S \subseteq [r]$, we define the \emph{partial identities w.r.t. $S$}, denoted by $L_S$ and equal to
\[
   L_S = \psi( \delta_{1,S}, \cdots , \delta_{r,S} ), \ \text{where} \ 
   \delta_{s,S} \in \Z/{q_s}\Z \ \text{is equal to} 
   \left\{
    \begin{array}{cl}
      1 & \text{if $s \in S$,} \\
      0 & \text{otherwise}.
    \end{array}
   \right.
\]

For any $S \subseteq [r]$, the partial identity $L_S$ can be constructed in $O(r)$ arithmetic operations in $\Z/Q\Z$ by evaluating $\psi$ on $( \delta_{1,S}, \cdots , \delta_{r,S} )$.
However, it is important to notice that if $S = [r]$, $L_{[r]} = \psi(1, \cdots ,1) = 1$, because $\psi$ is 
a ring isomorphism, and $L_{{r}}$ is computed in time $O(1)$.

Knowing the partial identities, we can implement the partial column operations (i) and (ii) for a set of indices $S$. Specifically,

\bigskip

\begin{itemize}
\item[$\text{(i)}$] replace column $\col_k$ by $(\col_k \times L_{[r]\setminus S} + \col_\ell \times L_S)$, and

replace column $\col_\ell$ by $(\col_\ell \times L_{[r]\setminus S} + \col_k \times L_S)$, 
\item[$\text{(ii)}$] multiply column $\col_k$ by $L_{[r]} -2 \times L_S$.
\end{itemize}

\bigskip

As mentioned earlier, we need a notion of ``partial multiplicative inverse'' in $\Z/Q\Z$ in order to pick the appropriate scalar $\alpha$ when defining a partial version of elementary column operation (iii).  We define the \emph{partial inverse} of an element of the ring $\Z/Q\Z$ to be:
\begin{definition}[Partial Inverse]
  Given a set $S \subseteq [r]$ of indices, and an element $x = \psi(u_1, \cdots ,u_r)$ in $\Z/Q\Z$, the \emph{partial inverse of $x$ with regard to $S$} is the element $\overline{x}^S \in \Z/Q\Z$ equal to
  \[
  \overline{x}^S = \psi(\overline{u_1}^S, \cdots , \overline{u_r}^S), \ \
  \text{with} \ \ 
  \overline{u_s}^S =
  \left\{
    \begin{array}{cl}
      u_s^{-1} & \text{if} \ s \in S \ \text{and} \ u_s \in \Z/q_s\Z^\times, \\
      0        & \text{otherwise.}\\
    \end{array}
  \right.
  \]
\end{definition}

We prove elementary arithmetic and computational properties of partial inverses. 
\begin{proposition}[Partial Inverse Construction]
  \label{prop:partialinv}
For a set $S \subseteq [r]$ of indices and an element $x = \psi(u_1, \cdots ,u_r)$ in $\Z/Q\Z$, the following is true:

\bigskip

  \begin{enumerate}[(1)]
  \item $\gcd(x,Q_S) = Q_R$ for some $R \subseteq S$. Additionally, for all $s \in S$, $u_s$ is invertible in $\Z/q_s\Z$ iff $s \notin R$. We denote by $T$ the set $T := S \setminus R$. 
  \item The Bezout's identity for $x$ and $Q_T$ gives $v \cdot x + w \cdot Q_T = 1$, where $v$ satisfies $v \modulo Q_T =  \psi_T((u_s^{-1})_{s \in T})$ 
  \item Finally, 
  \[
  \overline{x}^S = \left[
      \psi_T((u_s^{-1})_{s \in T}) \times L_T \modulo Q
    \right] \in \Z/Q\Z,
  \]
\noindent
 where $L_T$ is the partial identity with regard to $T$. 
  \end{enumerate}
\end{proposition}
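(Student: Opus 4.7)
My plan is to verify the three claims in sequence, since each relies on the previous and each reduces to a direct calculation through the componentwise description afforded by $\psi^{-1}$.

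For part (1), I would observe that $Q_S = \prod_{s\in S} q_s$ is squarefree, so every positive divisor of $Q_S$ has the form $Q_R$ for a unique $R \subseteq S$; in particular $\gcd(x,Q_S)=Q_R$ for some $R\subseteq S$. To identify $R$ componentwise, I note that since $u_s = x \bmod q_s$ by definition of $\psi^{-1}$, we have $q_s \mid x$ if and only if $u_s = 0$ in $\Z/q_s\Z$. Hence $q_s \mid \gcd(x,Q_S)$ iff $s\in S$ and $u_s=0$, so $R = \{s \in S : u_s = 0\}$ and, since $\Z/q_s\Z$ is a field, $u_s$ is invertible iff $u_s\neq 0$ iff $s \notin R$.

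For part (2), take $T = S \setminus R$. By construction $u_s \neq 0$, hence $q_s \nmid x$, for every $s \in T$, so $\gcd(x,Q_T)=1$. The extended Euclidean algorithm then produces $v,w\in\Z$ with $vx + wQ_T = 1$, from which $v\cdot x \equiv 1 \pmod{Q_T}$. Thus $v \bmod Q_T$ is the multiplicative inverse of $x \bmod Q_T$ in $(\Z/Q_T\Z)^{\times}$. Since $\psi_T$ is a ring isomorphism and sends inverses to componentwise inverses, and since $\psi_T^{-1}(x \bmod Q_T) = (u_s)_{s\in T}$, we obtain $v \bmod Q_T = \psi_T\bigl((u_s^{-1})_{s\in T}\bigr)$.

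For part (3), I would compute $\psi^{-1}$ of the claimed right-hand side componentwise and match it with the definition of $\overline{x}^S$. Let $z \in \{0,\ldots,Q_T-1\}$ denote the integer representing $\psi_T((u_s^{-1})_{s\in T})$, viewed as an element of $\Z/Q\Z$. Because $\psi$ is a ring homomorphism, the projection of the product $z\times L_T \bmod Q$ onto $\Z/q_s\Z$ equals $(z\bmod q_s)\cdot \delta_{s,T}$. For $s\in T$ this yields $u_s^{-1}$ (using $z\bmod q_s = u_s^{-1}$ from the CRT), and for $s\notin T$ it yields $0$. By the characterization from (1), $s\in T$ iff $s\in S$ and $u_s$ is invertible, so the resulting tuple coincides exactly with $(\overline{u_s}^{S})_{s\in[r]}$, proving the formula.

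The main obstacle I expect is keeping the bookkeeping clean in step (3): one must lift the element of $\Z/Q_T\Z$ into $\Z/Q\Z$ and rely on $L_T$ to annihilate the coordinates in $[r]\setminus T$, while being careful that the coordinates in $S\setminus T = R$ (where $u_s = 0$ is non-invertible) are correctly set to $0$ rather than spuriously picking up a value of $z \bmod q_s$. Once this componentwise matching is written out, the identity follows immediately from $\psi$ being a ring isomorphism.
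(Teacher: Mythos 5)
Your proof is correct and follows essentially the same route as the paper: identifying $R$ via divisibility of $x$ by the $q_s$, applying Bezout to $x$ and $Q_T$ after noting $\gcd(x,Q_T)=1$, and verifying the formula for $\overline{x}^S$ coordinate-by-coordinate modulo each $q_s$, with $L_T$ annihilating the coordinates outside $T$ (including those in $R$). The bookkeeping concern you flag in part (3) is exactly the point the paper's proof also handles by the case split $s\in T$ versus $s\notin T$, so no gap remains.
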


\begin{proof}
(1): The $\gcd$ of $x$ and $Q_S$ divides $Q_S$ so $\gcd(x,Q_S) = Q_R$
for some $R \subseteq S$, and for every index $s \in S$, $q_s$
divides $x$ iff $s \in R$. Denote $T := S \setminus R$. According to the Chinese Remainder Theorem, for any
$s \in T$, $u_s = x \modulo q_s \neq 0$ because $q_s$ does not divide $x$. Because $\Z/q_s\Z$ is a field, its unique non invertible element is
$0$ and consequently $u_s$ is invertible. Conversely, because $q_t$ divides
$x$ for $t \in R$, $x \modulo q_t = u_t = 0$ is non invertible.

\medskip

(2): First note that $x \modulo Q_T = \psi_T((u_t)_{t \in T}) \in
\Z/{Q_T}\Z$. Indeed, because $q_t$ divides $Q_T$ for all $t \in T$, we have 
$$\left( x \modulo Q_T \right) \modulo q_t = x \modulo q_t = u_t$$
By definition of $T$, $\gcd(x,Q_T) = 1$ and so the Bezout's lemma gives 
$$v \cdot x + w \cdot Q_T = 1$$
Applying $(\cdot \modulo Q_T)$ to both sides of the
equality gives 
$$(v \modulo Q_T) \psi_T((u_t)_{t \in T}) = 1, \text{and
consequently} \ ((v \modulo Q_T)
\modulo q_t)u_t = 1$$
for every $q_t$ such that $t \in T$. The result follows.

\medskip

(3): Let $L_T$ be the partial identity with regard to $T$. 
We form the product 
$$\widetilde{x} = \left[\psi_T((u_t^{-1})_{t \in T}) \times L_T \modulo Q
\right]$$ 
and evaluate it modulo $q_s$. For any index $s \in [r]$, 
$$\widetilde{x} \modulo q_s = \left[
( \psi_T((u_t^{-1})_{t \in T}) \modulo q_s) \times (L_T \modulo q_s)
\right] \modulo q_s$$
If $s \notin T$, then 
$$L_T \modulo q_s
= 0, \ \ \text{and} \ \ \widetilde{x} \modulo q_s = 0$$
If $s \in T$, then 
$$L_T \modulo q_s = 1, \ \ \text{and} \ \ \psi_T((u_t^{-1})_{t \in T}) \modulo q_s = u_s^{-1}$$ 
and
consequently $\widetilde{x} \modulo q_s =  u_s^{-1}$. Thus, $\widetilde{x}$
satisfies the definition of $\overline{x}^S$, the
partial inverse of $x$ with regard to $S$.  
\end{proof}

We directly deduce an algorithm to compute the partial inverse of $x$
w.r.t $S$ if $Q_S$ is given: compute
$Q_R = \gcd(x,Q_S)$ and $Q_T=Q_S/Q_R$, then $v$ using the extended
Euclidean algorithm and finally $\overline{x}^S = (v \modulo Q_T)\times
L_T \modulo Q$. Computing the partial identity $L_T$ requires $O(r)$ arithmetic operations in 
$\Z/Q\Z$, but is constant if $T = [r]$, which happens iff $S = [r]$ and $x$ is invertible in $\Z/Q\Z$. 
Consequently, computing $\overline{x}^S$ requires $O(r)$ arithmetic operations in general, but 
only $O(1)$ arithmetic operations in the latter case.

\subsection{Modular Reconstruction for Multi-Field Persistent Homology}
\label{subsec:mf_modrec_phom}

Let $\K$ be a filtered complex of size $m$. Define $\Matrix_\partial(\Z/q_s\Z)$ to be the $(m \times m)$ boundary matrix of $\K$ with $\Z/q_s\Z$ coefficients. Define $\Matrix$ to be the $(m \times m)$ matrix with $\Z/Q\Z$ coefficients such that the projection of $\Matrix$ onto $\Z/q_s\Z$ is equal to $\Matrix_\partial(\Z/q_s\Z)$, for all $s \in [r]$. Note that the matrices $\Matrix$ and $\Matrix_{\partial}(\Z/q_s\Z)$, for any $s$, are ``identical'' matrices in the sense that they contain $0$, $1$ and $-1$ coefficients at the same positions, where $0$, $1$ and $-1$ refer respectively to elements of $\Z/Q\Z$ and $\Z/q_s\Z$. 

We reduce a matrix $\Rmat$ which is initially equal to $\Matrix$. Denote by $\col_j$ the $j^{\text{th}}$ column of $\Rmat$. Define the \emph{extended $\low$ function $\low(j,Q_S)$} to be the index of the 
lowest element of $\col_j$ such that $\col_j[\low(j,Q_S)] \mod Q_S \neq 0$. In particular, $\low(j,q_s)$ is equal to the index of the lowest non-zero element of column $j$ in the projection $\Rmat(\Z/q_s\Z)$, and $\low(j,Q_S)$ is equal to 
\[
\low(j,Q_S) = \max_{s \in S} \low(j,q_s)
\]

After iteration $j$, we say that the columns $\col_1, \cdots, \col_j$ are \emph{reduced}. We maintain, for every reduced column $\col_j$, the collection of ``lowest indices'' $i$ as a set $\mathcal{L}(j) = \{ (i,Q_S) \}$ satisfying three conditions ensuring that $\low$ values for all indices $s \in [r]$ are represented, without redundancy. Specifically, the set $\mathcal{L}(j)$ satisfies: 

\smallskip 

\begin{itemize}
\item[-] For every $(i,Q_S) \in \mathcal{L}(j)$, $i = \low(j)$ in matrix $\R(\Z/q_s\Z)$ for every $s \in S$.
\item[-] Every two distinct pairs $(i,Q_S), (i', Q_{S'}) \in \mathcal{L}(j)$ satisfy both $i \neq i'$ and $S \cap S' = \emptyset$.
\item[-] The union $\displaystyle\cup_{(i,Q_S) \in \mathcal{L}(j)} S = [r]$. 
\end{itemize}

\smallskip

The algorithm is presented in Algorithm~\ref{algo:modrec}. It returns the set of triplets $\mathcal{P}_r = \{(i,j,Q_S)\}$ such that $i = \low(j)$ in the column echelon form of the matrix $\Matrix_{\partial}(\Z/q_s\Z)$ iff $s \in S$, or, equivalently, $(i,Q_S) \in \mathcal{L}(j)$ 
once $\col_j$ has been reduced. This is a compact encoding of the \emph{multi-field persistence diagram}. Note that it contains exactly $P_r$ elements. 

\begin{algorithm}[t]
\SetAlgoLined
\KwData{Matrix $\Rmat = \Matrix$, diagram $\mathcal{P}_r \leftarrow \emptyset$}
\KwOut{Multi-field persistence diagram $\mathcal{P}_r = \{(i,j,Q_S)\}$}
\For{$j=1, \cdots ,m$}
{
  $Q_S \leftarrow Q_{[r]}$\;                                             
  \While{$\low(j,Q_S)$ is defined}
  {
    $k \leftarrow \low(j,Q_S)$;  $\ \ \ \ Q_T \leftarrow Q_S / \gcd(\col_j[k], Q_S)$ \;
    \While{there exists $j' < j$ with $(i,Q_{T'}) \in \mathcal{L}(j')$\\
    \hfill satisfying $\left[ i = \low(j,Q_S) \text{ and } \gcd(Q_{T'},Q_T)>1 \right]$}
    {
		$Q_T \leftarrow Q_T / \gcd(Q_{T'},Q_T)$ \;
		$\col_j \leftarrow \col_j - \left(\col_j[k] \cdot \overline{\col_{j'}[k]}^T \right) \times \col_{j'}$ \;
    }
    \lIf{$Q_T \neq 1$}{$\mathcal{P}_r \leftarrow \mathcal{P}_r \cup \{(k,j,Q_T)\}$;
                  $\ \ \ $ $Q_S \leftarrow Q_S/Q_T$ \;}
  }
}
\caption{Simultaneous persistent homology algorithm for $\Z/q_1\Z, \ldots \Z/q_r\Z$}
\label{algo:modrec}
\end{algorithm}

The $\{\mathcal{L}(j)\}_j$ form an index table that we maintain implicitly.
At iteration $j$ of the {\tt for} loop, we use $Q_S$ for the product of all prime numbers 
$\prod_{s\in S} q_s$ for which the column $j$ in $\Rmat(\Z/q_s\Z)$ has not yet been reduced. 

\medskip

\noindent
{\bf Analysis.} We give details on the line-by-line computation of Algorithm~\ref{algo:modrec} in terms of operations induced in the matrices $\Rmat(\Z/q_s\Z)$ for $s \in [r]$. A set of indices $S \subset [r]$ is maintained by storing the product $Q_S$, and set operations, such as set difference and set intersection, are implemented using respectively arithmetic division and greatest common divisor. Specifically, for $T \subset S \subset [r]$, and $T'\subset [r]$,
\[
    Q_S / Q_T = Q_{S \setminus T} \ \ \ \text{and} \ \ \ \gcd(Q_{T}, Q_{T'}) = Q_{T \cap T'}
\]

The set $S$ in the {\bf while} loop line $3$ contains exactly the set of indices $s \in S$ such that the column $\col_j$ of matrix $\Rmat(\Z/q_s\Z)$ is not yet reduced. In line $4$, $k$ is the lowest row index of a $\col_j$ in any of the matrices $\Rmat(\Z/{q_t}\Z)$ such that $t \in S$ (i.e., a matrix in which $col_j$ is still unreduced). The matrices $\Rmat(\Z/q_t\Z)$ where $\low(j)$ is exactly $k$ are the ones for which $t \in T$ (line $4$). This property of $T$ is maintained over all of the {\bf while} loop line $3$.
 
The set $T'$ defined on line $5$ contains some indices $t$ such that $\col_j$ and $\col_{j'}$ have same lower index $i$ in $\Rmat(\Z/q_t\Z)$. 
By definition of the partial inverse and the sets $T$ and $T'$, the column operation line $8$ modifies only the matrix $\Rmat(\Z/q_t\Z)$ for $t \in T \cap T'$ and reduces strictly their $\low(j)$ values. In line $7$, the set $T$ is updated to contain exactly the indices $t$ such that $\low(j) = k$ in $\Rmat(\Z/q_t\Z)$. 

At line $10$, all columns $\col_j$ in $\Rmat(\Z/q_t\Z), t \in T$, are reduced and non-zero, we update the multi-field persistence diagram and maintain the property that $S$ contains exactly the indices $s$ for which $\col_j$ is still unreduced in $\Rmat(\Z/q_s\Z)$.

\medskip

\noindent
{\bf Correctness.} First, note that all operations processed on $\Rmat$ correspond to left-to-right elementary column operations in the matrices $\Rmat(\Z/q_s\Z)$ for all $s \in [r]$. 
One iteration of the {\tt while} loop in line $3$ either strictly reduces $Q_S$ by dividing it by $Q_T$ (when $T \neq \emptyset$ in line $10$) or sets $(\col_j[k] \mod Q_S)$ to zero thus reducing strictly $\low(j,Q_S)$ (when $T = \emptyset$ and $Q_T = 1$). Consequently, the algorithm terminates.

We prove recursively, on the number of columns, that each of the matrices $\Rmat(\Z/q_s\Z)$ gets reduced to column echelon form. We fix an arbitrary field $\Z/q_s\Z$: suppose that the $j-1$ first columns of $\Rmat(\Z/q_s\Z)$ have been reduced at the end of iteration $j-1$ of the {\tt for} loop in line $1$. We prove that at the end of the $j^{\text{th}}$ iteration of the {\tt for} loop in line $1$, the $j$ first columns of the matrix $\Rmat(\Z/q_s\Z)$ are reduced. Consider two cases. 

1. First suppose that there is a triplet $(i,j,Q_T)$ in the multi-field persistence diagram $\mathcal{P}_r$, for some $i < j$ and $Q_T$ satisfying $q_s$ divides $Q_T$. 
This implies that the algorithm exits the {\tt while} loop line $5$ with $q_s$ dividing $Q_T$, and $Q_T$ dividing $Q_S$ (because by definition of $Q_T$, in line $4$, $Q_T$ divides $Q_S$) and there is no $j' < j$ such that $\left[\low(j',Q_{T'}) = \low(j,Q_S)\right.$ and $\left[\gcd(Q_{T'},Q_T)>1 \right]$. This in particular implies that there is no $j' < j$ such that $\low(j',q_s) = \low(j,q_s)$ and column $j$ is reduced in $\Rmat(\Z/q_s\Z)$.

2. Secondly, suppose that there is no such pair $(i,j,Q_T)$ in $\mathcal{P}_r$, with $q_s$ dividing $Q_T$. Consequently, during all the computation of the {\tt while} loop in line $3$, $q_s$ divides $Q_S$. When exiting this {\tt while} loop, $\low(j,Q_S)$ is undefined, implying in particular that $\low(j,q_s)$ is 
undefined and column $j$ of $\Rmat(\Z/q_s\Z)$ is zero, and hence reduced.

\medskip

\noindent
{\bf Reconstruction of Cycles and Pairs.} Denote by $\Rmat(\Z/q_s\Z)$ the matrix maintained at iteration $i$ of the standard persistent homology Algorithm~\ref{algo:ph}
with coefficient in the field $\Z/q_s\Z$. Note that, at iteration $i$ of the modular reconstruction Algorithm~\ref{algo:modrec}, we maintain a matrix $\Rmat$ that is a compact representation of all matrices $\Rmat(\Z/q_s\Z)$, for $s = 1 \ldots r$. Indeed, applying $(\cdot \mod q_s)$ to all coefficients of $\Rmat$ leads to a matrix $\Rmat(C)$. Consequently, we can reconstruct the cycles and the persistent pairs for standard persistent homology with $\Z/q_s\Z$ coefficients, for any $s = 1 \ldots r$, with the modular reconstruction algorithm.

\section{Output-Sensitive Complexity Analysis} 
\label{sec:mf_cpx}

We start by describing a complexity model for the arithmetic operation on large integers.


\subsection{Arithmetic Complexity Model for Large Integers}
\label{sec:arithcpx}

During the reduction algorithm we perform arithmetic operations on big integers, for which we describe a complexity model~\cite{Gathen:2003:MCA:945759}. Suppose that on our architecture, a memory word is encoded on $\wordlength$ bits (on modern architectures, $\wordlength$ is usually $64$). Computer chips contain Arithmetic Logic Units that allow arithmetic operations on a $1$-memory word integer in $O(1)$ machine cycles. Let the \emph{length} of an integer $n$ be defined by: $\lambda(n) = \left\lfloor \log_2 n/\wordlength \right\rfloor +1$, i.e., by the number of memory words necessary to encode $n$. We express the arithmetic complexity as a function of the length. For any positive integer $n$ of length $\lambda(n)=\bits$, operations in $\Z/n\Z$ cost $\Acpx_+(n)=O(\bits)$ for additions, $\Acpx_\times(n)=O(M(\bits))$ for multiplications, and $\Acpx_\div(n)=O(M(\bits)\log \bits)$ for the (extended) Euclidean algorithm, inversions and divisions, where $M(\bits)$ is a monotonic upper bound on the number of word operations necessary to multiply two integers of length $\bits$~\cite{Gathen:2003:MCA:945759}. The best known upper bound~\cite{DBLP:journals/siamcomp/Furer09} is $M(\bits) = O(\bits \log \bits \ 2^{O(\log ^* \bits)})$, where $\log ^* \bits$ is the iterated logarithm of $\bits$. 

In the case of multi-field persistent homology, we are interested in the value of $\lambda$ for an element in $\Z/Q\Z$, $Q=q_1 \cdots q_r$, in the case where $\{q_1, \cdots ,q_r\}$ are the first $r$ prime numbers. By virtue of the inequalities~\cite{RS62} $\ln Q < 1.01624 q_r$, and $q_r < r \ln(r\ln r)$ for $r\geq 6$, the number of bits to encode a scalar in $\Z/Q\Z$ (as an integer between $0$ and $Q-1$) is $\lambda(Q) < \lfloor 1.46613 \ r \ln(r \ln r) / \wordlength \rfloor +1$. 

\medskip

\noindent
{\bf Notation.} We denote by $\Acpx_r$ an upper bound on the time complexities $\Acpx_+(Q)$, $\Acpx_\times(Q)$, and $\Acpx_\div(Q)$, for performing arithmetic operations on integers smaller than $Q$, where $Q = q_1 \times \ldots \times q_r$ is the product of the $r$ smallest prime numbers $q_1, \ldots , q_r$.

\subsection{Complexity of the Modular Reconstruction Algorithm}
\label{sec:cpx_modrec}

Let $\K$ be a filtered complex of size $m$. We describe computational complexities in terms of the size of the persistence diagram $P_\Field$, which is a $\Theta(m)$, and the size of the multi-field persistence diagram $P_r$. The persistent homology algorithm described in Section~\ref{sec:mf_algo}, applied on $\K$ with coefficients in a field $\Field$, requires $O(P_\Field^3)$ operations in $\Field$. For a field $\Z/q\Z$ these operations take constant time and the algorithm has complexity $O(P_\Field^3)$. The output of the algorithm is the persistence diagram.

For a set of prime numbers $\{q_1, \cdots ,q_r\}$, let $P_r$ be the total number of distinct pairs in all persistence diagrams for the persistent homology of $\K$ with coefficient fields $\Z/q_1\Z, \cdots, \Z/q_r\Z$. We express the complexity of the modular reconstruction algorithm in terms of the size of its output $P_r$, the number of fields $r$ and the arithmetic complexity $\Acpx_r$. 

First, note that, for a column $j'$ in the reduced form of $\Rmat$, the size of $\mathcal{L}(j')$ is equal to the number of triplets of the multi-field persistence diagram with death index $j'$. We denote this quantity by $|\mathcal{L}(j')|$. Hence, when reducing column $\col_j$ with $j > j'$, the column $\col_{j'}$ is involved in a column operation $\col_j \leftarrow \col_j + \alpha \cdot \col_{j'}$ at most $|\mathcal{L}(j')|$ times. Consequently, reducing $\col_j$ requires $O(\sum_{j' < j} |\mathcal{L}(j')|) = O(P_r)$ column operations. There is a total number of $O(m \times P_r)$ column operations to reduce the matrix, each of them being computed in time $O(m \times \Acpx_r)$.

Computing the partial inverse of an element $x \in \Z/Q\Z$ takes time $O(r \times \Acpx_r)$ in the general case, and only $O(\Acpx_r)$ if $x$ is invertible in $\Z/Q\Z$. The partial inverse of an element $x = \col_j[k]$ is computed only if there is a pair $(k,Q_T) \in \mathcal{L}(j)$. This element is not invertible in $\Z/Q\Z$ iff $|\mathcal{L}(j)| > 1$. There are consequently $O(|P_r - m|)$ non-invertible elements $x$ that are at index $\low(j,Q_T)$ in some column $j$, for some $Q_T$. If we store the partial inverses when we compute them, the total complexity for computing all partial inverses in the modular reconstruction algorithm is $O(m + r \times (P_r - m) \times \Acpx_r)$.
We conclude that the total cost of the modular reconstruction algorithm for multi-field persistent homology is 

\[
O\left( \left[ r \times (P_r - m) + m^2 P_r \right] \times \Acpx_r) = O( \left[ r \times (P_r-P_\Field) + P_r^3 \right] \times \Acpx_r\right)
\]

\noindent
while the brute-force algorithm, consisting in computing persistence separately for every field $\Z/q_1\Z, \cdots ,\Z/q_r\Z$ has time complexity 
\[O(r \times P_\Field^3)\]

\subsection{Discussion and Limitations}
\label{sec:disc_cpx}

Comparing the time complexity of the modular reconstruction algorithm and the brute-force approach, we notice that the former in particularly more efficient than the latter when $\Acpx_r$ is not too large, and the difference between persistence diagrams for different coefficient fields are few. In that case, assuming $r \times (P_r-P_\Field) \ll P_r^3$, the trade-off of time complexities is
\[
  \frac{r}{\Acpx_r}
\]

In light of Section~\ref{sec:arithcpx}, the complexity $\Acpx_r$ in practice is a near-linear function in the number of memory word $\lambda(Q)$ necessary to store the integer $Q = q_1 \cdots q_r$, for the first $r$ prime numbers. In particular, we note that $\lambda(Q)\ll r$ for $r \ln r \ll e^\wordlength$.

We note two limitations to the modular reconstruction algorithm. First, for large numbers $r$ of primes, one arithmetic operation in $\Z/{Q_{[r]}}\Z$ becomes more costly than $r$ distinct arithmetic operations in $\Z/q_1\Z, \cdots, \Z/q_r\Z$, in which case the modular reconstruction approach developed in this article becomes worse than brute-force (even when $P_r$ and $P_\Field$ remain close). 

Second, for complexes with torsion subgroups of very high order in their homology, the number of distinct pairs $P_r$ in all the persistence diagram may become large.

We study these cases in practice in the experimental Section~\ref{sec:mf_expe}.

\section{Complexity Analysis in Terms of Index Persistence}
\label{app:mf_barcomplexity}

The cubic dependence in the size of the persistence diagram is, in practice, pessimistic. In this section we refine the complexity analysis in terms of the length of persistence intervals, in the spirit of the sparse complexity analysis of the standard persistence algorithm~\cite{DBLP:books/daglib/0025666}. First, we recall the sparse complexity analysis of the persistent homology algorithm.

\begin{theorem}[Sparse Complexity Analysis PH~\cite{DBLP:books/daglib/0025666}[Chapter VII] ] 
With a sparse matrix implementation, where only non-zero matrix coefficients are represented, the algorithm reduces the boundary matrix of a filtered simplicial complex of dimension $d$ in 
\[
O\left(d \times \left[ \sum_{ (i,j) \in \mathcal{P}_\Field, j \neq \infty } |j-i|^2 + \sum_{(i, \infty) \in \mathcal{P}_\Field} i^2 \right] \right)
\]

\noindent 
arithmetic operations in $\Field$.
\label{thm:mf_indexpersistence_cpx}
\end{theorem}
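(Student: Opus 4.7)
The plan is to adopt a sparse column representation—storing $\col_j$ as the list of its non-zero $(\text{row},\text{value})$ pairs—so that a single elementary column operation $\col_j \leftarrow \col_j - \alpha \col_{j'}$ in Algorithm~\ref{algo:ph} costs $O(|\col_{j'}|)$ arithmetic operations in $\Field$. The total work is then $\sum_{\text{ops}} O(|\col_{j'}|)$, and the goal is to bound this sum by the right-hand side of the theorem. First, I would count the number of operations applied to each column by noting that each operation strictly decreases $\low(j)$ (it cancels the current pivot), and that initially $\low(j) \leq j-1$ since $\col_j = \partial \sigma_j$ is supported on faces of $\sigma_j$. Hence, if $(i,j) \in \mathcal{P}_\Field$ is a death pair, $\col_j$ undergoes at most $j-i$ operations before reaching pivot $i$; if $(i,\infty) \in \mathcal{P}_\Field$ is essential, $\col_i$ reduces to zero in at most $i$ operations.

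Second, I would prove the key lemma that a reduced non-zero column corresponding to a pair $(i',j')$ satisfies $|\col_{j'}| = O\bigl(d \cdot (j'-i')\bigr)$, by induction on the persistence length $j'-i'$. The base case uses that $\col_{j'}$ starts with at most $d+1$ non-zero entries (boundary of a simplex of dimension at most $d$); the inductive step handles the at most $j'-i'$ reductions of $\col_{j'}$ itself by invoking the same bound on the columns of strictly smaller persistence that are added into $\col_{j'}$. Combining the two ingredients: for a death pair $(i,j)$, any $\col_{j'}$ used in reducing $\col_j$ satisfies $j'<j$ and $i \leq \low(j')=i' < j$, so $j'-i' \leq j-i$ and the cost of reducing $\col_j$ is at most $(j-i) \cdot O(d(j-i)) = O(d(j-i)^2)$; for an essential pair $(i,\infty)$, any $\col_{j'}$ used in reducing $\col_i$ has $j' < i$ and therefore $|\col_{j'}| = O(d \cdot i)$, giving total cost $O(d i^2)$. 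Summing over all pairs of $\mathcal{P}_\Field$ yields the claimed bound.

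The hard part will be the column-size lemma of step two. A naive induction is doomed, because each reduction of $\col_{j'}$ can a priori introduce up to $|\col_{j''}|$ fresh non-zero entries, potentially blowing up with the persistence length. The correct argument must exploit two facts: (i)~the non-zero rows of a reduced column all lie below the pivot $\low(j')$, constraining the row indices that can accumulate; and (ii)~arithmetic is performed over a field, so contributions in the same row cancel systematically rather than merely piling up. A secondary subtlety is that positive-simplex columns $\col_i$ paired with a finite death time $j$ also reduce to zero, and a surface-level analysis would assign them a cost of $O(d i^2)$, which would spoil the bound; this must be shown to be absorbed into the $O(d(j-i)^2)$ budget of the matched death pair, typically via the homology/cohomology persistence duality in which the role of $\col_i$ in homology is symmetric to that of $\col_j$ in cohomology.
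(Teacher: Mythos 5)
Your overall decomposition (number of column additions $\times$ size of the columns involved, via a column-size lemma) is the right skeleton and matches what the paper's proof implicitly does, but there are two genuine gaps. First, the column-size lemma you flag as ``the hard part'' is indeed the crux, and the ingredients you propose for it ((i) rows constrained relative to the pivot, (ii) cancellation over a field) are not the ones that make it work. The correct mechanism is a nesting/support argument, not cancellation: every column $\col_{j''}$ added into $\col_{j'}$ during its reduction satisfies $i' < \low(j'') = i'' < j'' < j'$, so by induction the reduced $\col_{j'}$ is a \emph{linear combination of at most $j'-i'$ original boundary columns} $\partial\sigma_{j''}$ with $j'' \in (i', j']$, each of which has at most $d+1$ non-zero entries. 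Its support is therefore contained in a union of at most $j'-i'$ sets of size $d+1$, giving $|\col_{j'}| \leq (d+1)(j'-i')$ with no cancellation needed. Without this argument your induction is, as you suspect, not closed, and the additive bookkeeping you describe would only yield a quadratic bound on the column size and hence a cubic bound per pair.

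Second, and more seriously, your treatment of positive columns with finite death is not a proof. You correctly observe that reducing such a column $\col_i$ to zero by the plain left-to-right algorithm can cost $O(d\,i^2)$, which is not dominated by the stated bound, but the proposed fix---absorbing this cost into the budget of the matched death pair via homology/cohomology duality---does not work for the algorithm being analyzed: the columns used to zero out $\col_i$ are arbitrary reduced columns $\col_{j'}$ with $j' < i$, not columns confined to the window $(i,j)$, so there is no $O(d(j-i)^2)$ budget to absorb into. The paper resolves this with the \emph{clear} optimization of Chen--Kerber (its point 3): the columns are processed by decreasing dimension, so that when the death column $j$ is reduced and the pair $(i,j)$ is discovered, the birth column $\col_i$ is simply set to zero in $O(1)$ without ever being reduced. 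This is an algorithmic modification, not an accounting trick, and the stated bound is false for the unmodified reduction. With the clearing step and the support lemma above, your operation counts (at most $j-i$ additions for a finite pair, at most $i$ for an essential one, since each addition strictly decreases $\low$) do combine to give the claimed bound.
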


\begin{proof}
The proof is identical to the one in~\cite{DBLP:books/daglib/0025666}, except that the ``clear'' optimization of~\cite{Chen11persistenthomology} (see also~\cite{Bauer:arXiv1303.0477}) allows us to improve the bound. The argument of the proof relies on the fact that: 

1. to reduce a column $\col_j$, eventually leading to an interval $(i,j)$ in the diagram, only columns $\col_{j'}$ with $i < j' < j$ are used for the reduction.

2. to reduce a column $\col_i$, eventually leading to an interval $(i, \infty)$ in the diagram, only columns $\col_{j'}$ with $j' < i$ are used for the reduction.

3. any column $\col_i$ such that $i$ is the birth index of a finite interval in the diagram can be reduced in $O(1)$ operations using the clear optimization.

The complexity bound can be read directly from this analysis.
\end{proof}

We can deduce almost directly from the proof of Theorem~\ref{thm:mf_indexpersistence_cpx} the following:

\begin{corollary}
With a sparse matrix implementation, where only non-zero matrix coefficients are represented, the modular reconstruction algorithm for multi-field persistent homology, with coefficient fields $\Z/q_1\Z, \cdots, \Z/q_r\Z$, applied on a filtered simplicial complex of dimension $d$, has complexity:
\[
O \left( 
\begin{array}{l} 
   \Acpx_r \times r \times (P_r-P_\Field) \ + \\
   \Acpx_r \times d \times \left[ \sum_{ (i,j,Q_S) \in \mathcal{P}_r, j \neq \infty } |j-i|^2 \times |\mathcal{L}(j)| + \sum_{(i, \infty, Q_S) \in \mathcal{P}_r} i^2 \times |\mathcal{L}(i)| \right]
\end{array}  \right)
\]

\noindent
where $|\mathcal{L}(j)|$ is the number of triplets $(i,j,Q_S)$ of the multi-field persistence diagram $\mathcal{P}_r$ dying at index $j$.
\end{corollary}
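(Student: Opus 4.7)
The plan is to replay the sparse-analysis argument of Theorem~\ref{thm:mf_indexpersistence_cpx} with two additional pieces of bookkeeping: the arithmetic cost $\Acpx_r$ of each operation in $\Z/Q\Z$, and a reuse factor $|\mathcal{L}(j)|$ counting how many distinct triplets in $\mathcal{P}_r$ share the death index $j$.

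First, I would isolate the partial-inverse contribution. From the analysis in Section~\ref{sec:cpx_modrec}, computing $\overline{x}^S$ for $x \in \Z/Q\Z$ costs $O(\Acpx_r)$ when $x$ is invertible in $\Z/Q\Z$, and $O(r \cdot \Acpx_r)$ otherwise. A pivot coefficient $\col_j[k]$ fails to be invertible exactly when the column it reduces against carries $|\mathcal{L}(\cdot)|>1$, which happens at most $O(P_r-P_\Field)$ times in total; with caching, this yields the additive term $\Acpx_r \times r \times (P_r-P_\Field)$, exactly as in the cubic bound of Section~\ref{sec:cpx_modrec}.

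For the column-operation term, I would reuse the three observations underlying Theorem~\ref{thm:mf_indexpersistence_cpx} together with the clear optimization: the reduction steps contributing to a finite triplet $(i,j,Q_S)$ use only pivots $\col_{j'}$ with $i<j'<j$, those contributing to an infinite triplet $(i,\infty,Q_S)$ use only $\col_{j'}$ with $j'<i$, and any column $\col_i$ that is itself a birth of a finite triplet can be skipped for reduction. Each partial column operation then touches $O(d \cdot |j-i|)$ nonzero entries by the sparse simplicial bound, and each arithmetic step costs $\Acpx_r$. The new ingredient specific to Algorithm~\ref{algo:modrec} is that a single pair $(j,j')$ may trigger up to $|\mathcal{L}(j')|$ partial column operations (one per element of $\mathcal{L}(j')$ whose $Q_{S'}$ still intersects the active $Q_T$), and, symmetrically, $\col_j$ must be reduced in $|\mathcal{L}(j)|$ different ``directions'' to extract all of its triplets. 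Summing the cost $O(d \cdot |j-i| \cdot \Acpx_r)$ over at most $|j-i| \cdot |\mathcal{L}(j)|$ operations per triplet, and then over all triplets in $\mathcal{P}_r$, produces the displayed bound.

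The main obstacle I anticipate is the charging step: each partial column operation must be attributed to a unique triplet without overcounting. The natural choice is to charge it to the triplet $(i,j,Q_S) \in \mathcal{L}(j)$ whose $Q_S$ contains the primes active in the $Q_T$ being reduced at that step; since each such operation strictly reduces $\low(j,Q_T)$ with $Q_T$ dividing $Q_S$, the total charge on $(i,j,Q_S)$ is at most $|j-i|$, and the factor $|\mathcal{L}(j)|$ accounts for distributing charges across the triplets with common death $j$. Making this bookkeeping precise is what turns the proof of Theorem~\ref{thm:mf_indexpersistence_cpx} into a proof of the corollary.
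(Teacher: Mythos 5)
Your argument is correct and follows essentially the same route as the paper: the paper's proof simply observes that, when reducing $\col_j$, a pivot column $\col_{j'}$ may be added up to $|\mathcal{L}(\cdot)|$ times with different partial inverses, and then declares the rest identical to the proof of Theorem~\ref{thm:mf_indexpersistence_cpx}; your additional bookkeeping (the separate partial-inverse term from Section~\ref{sec:cpx_modrec} and the explicit charging of each partial column operation to a triplet) just spells out what the paper leaves implicit. No gap.
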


\begin{proof}
We note that, when reducing a column $\col_j$ in the modular reconstruction algorithm, a column $\col_{j'}$ is added at most $|\mathcal{L}(j)|$ times to $\col_j$, with different multiplicative weights. The rest of the proof is identical to the proof of Theorem~\ref{thm:mf_indexpersistence_cpx}.
\end{proof}

\section{Experiments and Applications}
\label{sec:mf_expe}

In this section we report the performance of the modular reconstruction algorithm for multi-field persistent homology against the brute-force approach consisting in computing persistent homology separately for every field of coefficients. Our implementation is in {\tt C++} and is available within the {\tt Gudhi} software library~\cite{gudhi:PersistentCohomology} for topological data analysis. We use the {\tt GMP} library~\cite{gmplib_cite} for storing large integers. All timings are measured on a $64$ bits Linux machine with $3.00$ GHz processor and $32$ GB RAM., and are averaged over $10$ independent runs. 

We compute the persistent homology of Rips complexes~\cite{DBLP:books/daglib/0025666}, which are one of the most popular constructions in topological data analysis, built on a variety of both real and synthetic geometric data, and we compute the persistent homology of a variety of random simplicial complexes. We use the \emph{compressed annotation matrix} implementation of persistence~\cite{DBLP:journals/algorithmica/BoissonnatDM15} for its efficiency and stability over various datasets. Additionally, the compressed annotation matrix is one of the fast implementations of persistent homology that use few arithmetic operations.

\subsection{Description of the Data}

Datasets and running times are presented in Figure~\ref{fig:mf_timings}.

\medskip

\noindent
{\bf Topological Data Analysis.} We use a variety of natural and synthetic geometric data for the running times: {\bf Bud} is a set of points sampled from the surface of the {\it Stanford Buddha} in $\R^3$. {\bf Bro} is a set of $5\times 5$ {\it high-contrast patches} derived from natural images, interpreted as vectors in $\R^{25}$, from the Brown database (with parameter $k=300$ and cut $30\%$)~\cite{DBLP:journals/ijcv/CarlssonISZ08}. {\bf Cy8} is a set of points in $\R^{24}$, sampled from the space of conformations of the cyclo-octane molecule~\cite{martin2010top}, which is the union of two intersecting surfaces. {\bf Kl} is a set of points sampled from the surface of the figure eight Klein Bottle embedded in $\R^5$. Finally {\bf S3} is a set of points distributed on the unit $3$-sphere in $\R^4$. Datasets are listed in Figure~\ref{fig:mf_timings} with the size of point sets $|P|$, the ambient dimensions $D$ and intrinsic dimensions $d$ of the sample points (if known), the thresholds $\rho$ for the Rips complex and the size of the complexes constructed $|\mathcal{K}|$. 

In topological data analysis, data points are generally geometric samples of low-dimensional spaces---such as manifolds---embedded in high-dimensions. Their persistence usually show few (or none) long living torsion, of low order. 

\medskip

\noindent
{\bf Random Complexes.} We use three distinct models of random simplicial complexes ; see for example~\cite{Bobrowski2018} for a survey on random complexes. The complex $\mathbf{\mathcal{R}(10000,0.25)}$ is the 5-skeleton of a Rips complex on $10000$ uniform random points in the unit cube in $\R^5$, with threshold $0.25$, where the filtration in the standard (geometric) Rips filtration. $\mathbf{X(200,5000)}$ is the 5-skeleton of a random flag complex on $200$ vertices with $5000$ random edges, where the filtration is induced by an ordering of the edges. $\mathbf{Y_2(50,3000)}$ is a Linial-Meshulam random 2-complex on $50$ vertices with $3000$ random triangles, where the complex is filtered by an ordering of the triangles. 

These complexes usually show a lot of torsion in their persistence, that may be of high order. In particular, Linial-Meshulam random 2-complexes on $n$ points are known to show experimentally a burst of torsion, of potentially super-exponential order in $n$.


%
\begin{figure}[t]
\centering
\setlength{\tabcolsep}{2.8pt}
\begin{tabular}{|l | r | r | r | r | r | r | r | r | r | r | r | r | r|}
\hline
Data & $|P|$ & $D$ & $d$ & $\rho$ & $|\mathcal{K}|$ & $T_1$ & $\text{R}_1$ & $T_{50}$ & $\text{R}_{50}$ & $T_{100}$ &
$\text{R}_{100}$ & $T_{200}$ & $\text{R}_{200}$ \\
\hline
{\bf Bud} & 49,990 & 3  & 2 & 0.09 & $127 \cdot 10^6$ & 96.3  & 0.51 & 110.3 & 22.2 & 115.9 & 42.3 & 130.7 & 75.0\\
{\bf Bro} & 15,000 & 25 & ? & 0.04 & $142 \cdot 10^6$ & 123.8 & 0.41 & 143.5 & 17.8 & 150.2 & 34.0 & 174.5 & 58.5\\
{\bf Cy8} & 6,040  & 24 & 2 & 0.8  & $193 \cdot 10^6$ & 121.2 & 0.63 & 134.6 & 28.2 & 139.2 & 54.6 & 148.8 & 102.2\\
{\bf Kl}  & 90,000 & 5  & 2 & 0.25 & $114 \cdot 10^6$ & 78.6  & 0.52 & 89.3  & 23.0 & 93.0  & 44.1 & 105.2 & 78.0\\
{\bf S3}  & 50,000 & 4  & 3 & 0.65 & $134 \cdot 10^6$ & 125.9 & 0.40 & 145.7 & 17.2 & 152.6 & 32.8 & 177.6 & 50.3 \\
\hline
\end{tabular}
\setlength{\tabcolsep}{6.3pt}
\begin{tabular}{|l | r | r | r | r | r | r | r | r | r |}
\hline
Data & $D$ & $T_1$ & $\text{R}_1$ & $T_{50}$ & $\text{R}_{50}$ & $T_{100}$ & $\text{R}_{100}$ & $T_{200}$ & $\text{R}_{200}$ \\
\hline
$\mathbf{\mathcal{R}(10000,0.25)}$ & 5 & 11.6 & 0.49 & 14.9 & 20.0 & 15.7 & 37.1 & 19.4 & 61.1 \\
$\mathbf{X(200,5000)}$ & 5 & 10.55 & 0.52 & 34.9 & 21.6 & 47.5 & 29.0 & 67.6 & 41.1 \\
$\mathbf{Y_2(50,3000)}$ & 2 & 0.22 & 0.42 & 1.55 & 4.69 & 3.36 & 4.8 & 6.9 & 3.8 \\
\hline
\end{tabular}
\caption{Timings $T_r$ of the modular reconstruction algorithm for the first $r$ prime numbers, and ratio $R_r$ with the brute-force algorithm. Top: Rips complexes on geometric data from topological data analysis. Bottom: Diverse models of random complexes.}
\label{fig:mf_timings}
\end{figure}

\subsection{Time Performance of the Algorithm}

In Figure~\ref{fig:mf_timings}, the values $T_r$ for $r \in \{1, 50, 100, 200\}$ refers to the running time of the modular reconstruction algorithm for the $r$ first prime numbers, and $\text{R}_r$ refers to the ratio between the timings of the brute-force approach (cumulating timings for persistence in every coefficient field), and the timings of the modular reconstruction algorithm. Timings are average over $10$ independent running times, picking up new instances of complexes for the random complexes.

\medskip

\noindent
{\bf Topological Data Analysis.} Interestingly, we observe that on all experiments the number of differences between persistence diagrams with various coefficient fields is small. Following Section~\ref{sec:disc_cpx}, the quantity $P_r-P_\Field$ can be considered to be a small constant in our experiments. We have also observed that these differences appeared for small prime numbers $q_s$. 
Consequently, the linear dependence in $r$ from component $r \times (P_r - P_\Field)$ of the complexity analysis in Section~\ref{sec:mf_cpx} is negligible experimentally. We can consider that, experimentally, the ratio between the brute-force timings and the modular reconstruction timings is at most 
\[
    \frac{r}{\Acpx_r}
\]
where, in light of the discussion of Section~\ref{sec:arithcpx}, $\Acpx_r$ is a small constant for small to medium values of $r$ (here, $r \leq 200$). Specifically, for $q_1, \ldots , q_r$ the $r$ first prime numbers and on a $64$ bits machine, the number of memory words necessary to represent the product $Q = q_1 \times \ldots \times q_r$ is $\lambda(Q) = 7, 15, \text{ and } 32$ for $r = 50, 100, \text{ and } 200$ respectively. Additionally, the optimized implementation of persistent homology using fewer arithmetic operations, the trade-off $r / \Acpx_r$ is pessimistic. These considerations are confirmed by the experiments.

\smallskip

Figure~\ref{fig:mf_timings} presents the timings of the modular reconstruction approach for a variety of filtered simplicial complexes ranging between $114$ and $193$ million simplices. We note that from $r=1$ to $r=200$ prime numbers, the time for computing multi-field persistence using the modular reconstruction approach only increases by $23$ to $41\%$, when the brute-force approach requires about $200$ times more time, as expected. This difference appears in the speed-up expressed by the ratio $R_r$. 
For $r=1$, the modular reconstruction approach is about twice slower than the standard persistent homology algorithm in one field, because modular reconstruction is a more complex procedure and deals, in our implementation, 
with {\tt GMP} integers that are slower than the classic {\tt int} used in the standard persistent homology algorithm. However, this difference fades away as soon as $r>1$ and the modular reconstruction is significantly more efficient than brute-force: it is, in particular, between $50.3$ and $102.2$ times faster for $r=200$. We study the asymptotic behaviour of the running times for large values of $r$ in Section~\ref{subsec:asymptotics}.

\medskip

\noindent
{\bf Random Complexes.} Figure~\ref{fig:mf_timings} presents timings for the modular reconstruction on random complexes. A similar analysis as the one for geometric data holds for the random complexes $\mathbf{\mathcal{R}(10000,0.25)}$ and $\mathbf{X(200,5000)}$, despite the appearance of more torsion in their persistent homology. Indeed, for $r=1$ we observe that the modular reconstruction algorithm is about twice slower due to the manipulation of {\tt GMP} integers, but for increasing values of $r$ the modular reconstruction approach gets faster, and is in particular between $41$ and $61$ times faster for $r=200$.

The case of the random Linial-Meshulam complex $\mathbf{Y_2(50,3000)}$ shows the limit of the approach, and the difference of running times is not as remarkable. These complexes show short torsion in their persistent homology (see Section~\ref{subsec:app} for an analysis) but the torsion subgroups of $H_1(Y_2)$ are of very high order. Following Section~\ref{sec:disc_cpx}, the difference $P_r - P_\Field$ in the complexity analysis increases for larger values of $r$, becoming non-negligible and hence slowing down the modular reconstruction algorithm.

\begin{figure}[!t]
  \centering
      \begin{tikzpicture}[scale=0.75] 
      \begin{axis}[
        scaled ticks=false,
        legend style={ at={(0.015,0.9)}, anchor=west},
        grid=major,
        width=13cm,
        height=9cm,
        xmin=0,
        xmax=151,
        ymin=-7,
        ymax=330,
        xlabel= number of primes $r$,
        ylabel = time (s.)]        
        \addplot
        coordinates {
            (1,   6.46 )
            (10,  6.52 )
            (20,  7.00 )
            (30,  7.52 )
            (40,  8.01 )
            (50,  8.07 )
            (60,  7.76 )
            (70,  8.00 )
            (80,  8.10 )
            (90,  8.26 )
            (100, 8.31 )
            (110, 8.41 )
            (120, 9.08 )
            (130, 9.54 )
            (140, 9.44 )
            (150, 9.56 )
        };   \addlegendentry{modular reconstruction}
        \addplot[red,mark = square*] 
        coordinates {   
            (1,  2.13  )
            (10, 21.3  )
            (20, 42.6  )
            (30, 63.9  )
            (40, 85.2  )
            (50, 106.5 )
            (60, 127.8 )
            (70, 149.1 )
            (80, 170.4 )
            (90, 191.7 )
            (100, 213  )
            (110, 234.3)
            (120, 255.6)
            (130, 276.9)
            (140, 298.2)
            (150, 319.5)
        };  \addlegendentry{brute force}
      \end{axis}

      \begin{axis}[
        scaled ticks=false,
        legend style={ at={(0.015,0.7)}, anchor=west},
        width=13cm,
        height=9cm,
        xmin=0,
        xmax=101,
        hide x axis,
        ymin=0,
        ymax=40,
        axis y line=right, 
        ylabel = ratio]
        \addplot[black,mark=triangle*] 
        coordinates { 
            (1   ,0.33)
            (10  ,3.27)
            (20  ,6.09)
            (30  ,8.50)
            (40  ,10.64)
            (50  ,13.20)
            (60  ,16.47)
            (70  ,18.64)
            (80  ,21)
            (90  ,23.21)
            (100 ,25.63)
            (110 ,27.86)
            (120 ,28.15)
            (130 ,29.03)
            (140 ,31.59)
            (150 ,33.42)

        };\addlegendentry{ratio}
      \end{axis}
      \end{tikzpicture} 

      \caption{Timings for the modular reconstruction
        algorithm and brute force.}
      \label{fig:mf_plotsmall}
\end{figure}
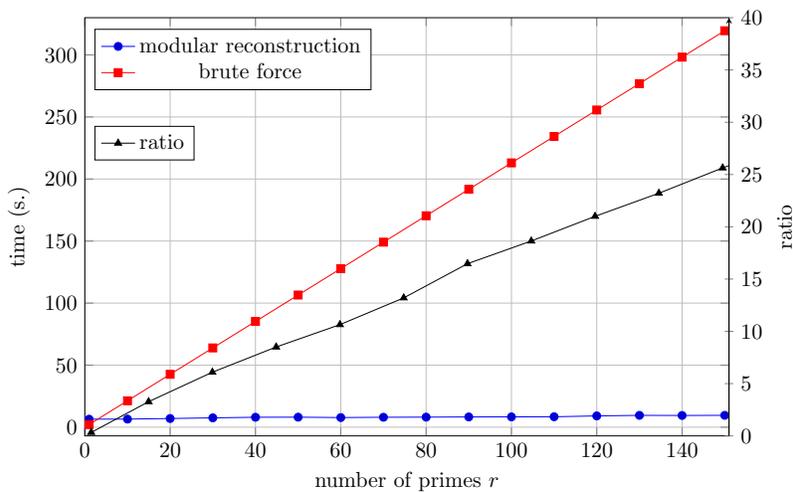

\begin{figure}
    \centering

    \begin{tikzpicture}[scale=0.75] 
    \begin{axis}[
            legend style={ at={(0.015,0.94)}, anchor=west},
            scaled ticks=false,
            grid  =major,
            width =13cm, 
            height=9cm,
            xmin=0,
            xmax=100000,
            xtick = {0,30000,60000,90000},
            ymin=0,
            ymax=4000,
            xlabel= number of primes $r$,
            ylabel = time (s.)]     ] 
        \addplot coordinates {
        (1      , 6.46)
        (10000  , 235.1)   
        (20000  , 540.2)
        (30000  , 896.6)
        (40000  , 1253.5)
        (50000  , 1649.6) 
        (60000  , 2047.2)
        (70000  , 2467.0)
        (80000  , 2898.3)
        (90000  , 3342.6)
        (100000 , 3840.0)    }; \addlegendentry{modular reconstruction}
    \end{axis}

      \begin{axis}[
            scaled ticks=false,
            legend style={ at={(0.015,0.81)}, anchor=west},
            width=13cm,
            height=9cm,
            xmin=0,
            xmax=100000,
            hide x axis,
            ymin=0,
            ymax=130,
            axis y line=right, 
            ylabel = ratio]
            \addplot[black,mark=triangle*] 
          coordinates {
                        (10000  , 90.1)   
                        (20000  , 78.9)
                        (30000  , 71.3)
                        (40000  , 68.0)
                        (50000  , 64.5) 
                        (60000  , 62.4)
                        (70000  , 60.4)
                        (80000  , 58.8)
                        (90000  , 57.4)
                        (100000 , 55.5)    }; \addlegendentry{ratio}

    \end{axis}
    \end{tikzpicture} 

\caption{Asymptotic behaviour of modular reconstruction and brute force.}
\label{fig:mf_plotasymptotic}
\end{figure}
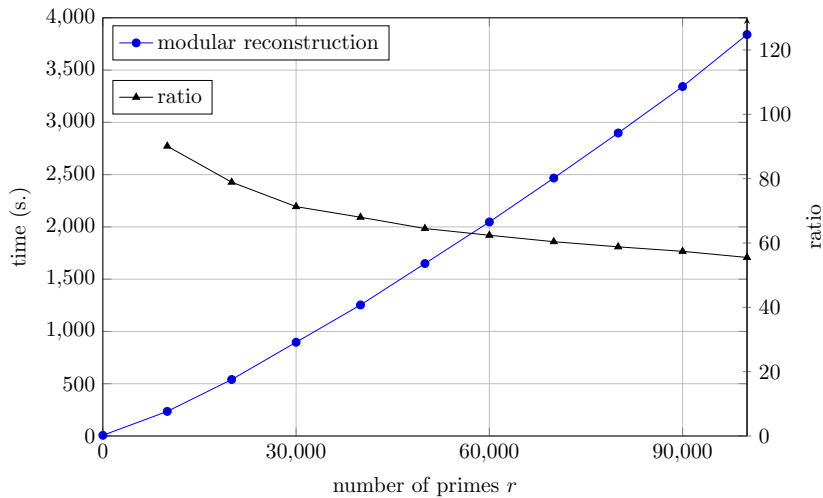

\subsection{Asymptotic Behaviour in the Number of Primes for Geometric Data}
\label{subsec:asymptotics}

A limit of the modular reconstruction algorithm is the arithmetic complexity $\Acpx_r$ for large $r$ ; see Section~\ref{sec:disc_cpx}. Additionally, in the case of topological data analysis, where the underlying space of the sample is unknown, the number $r$ of primes used for multi-field persistence is ``an exploratory parameter'', attempting to find an upper bound $q_r$ on the prime divisor of the torsion coefficients. 

Figures~\ref{fig:mf_plotsmall} and~\ref{fig:mf_plotasymptotic} present the evolution of the running time of the modular reconstruction approach and the brute-force approach for an increasing number of fields $r$ (using the first $r$ prime numbers). Persistence is computed for a Rips complex built on a set of $10\, 000$ points sampling a Klein bottle, which contains torsion in its 
integral homology, resulting in a simplicial complex of $6.14$ million simplices. We analyse the result in terms of the complexity analysis of Section~\ref{sec:mf_cpx}. Here again, $P_\Field$ and $P_r$ remain close during the experiment, even when $r$ grows. The complexity of the brute-force algorithm is $O(r \times P_\Field^3)$ and we indeed observe a linear behaviour when $r$ increases. The complexity of the modular reconstruction approach is $O( \left[ r \times (P_r-P_\Field) + P_r^3 \right] \Acpx_r)$. The part $r \times (P_r-P_\Field)$ of the complexity is negligible because $(P_r-P_\Field)$ is small. For medium values of $r$ ($\leq 150$), like in Figure~\ref{fig:mf_plotsmall}, the arithmetic complexity $O(\Acpx_r)$ increases slowly because $\lambda(Q_{[r]}) = \left\lfloor \log_2 Q_{[r]}/\wordlength \right\rfloor +1$ increases slowly. Together with the little use of arithmetic operations, we consequently observe a very slow increase of the time complexity, compare to the one of brute-force. 

Figure~\ref{fig:mf_plotasymptotic} describes the asymptotic behaviour of the modular approach, where 
the arithmetic operations become costly. We observe that the timings for 
the modular reconstruction approach follow a convex curve. The convexity comes from the growth of 
$\lambda(Q_{[r]})$, which is asymptotically $\Theta(r \log r)$)~\cite{RS62}. However, the increasing of 
the slope is very slow: all along this experiment, 
we have been unable to reach a value of $r$ for which the modular approach is worse than the brute-force approach. For readability, the timings for the brute-force approach are implicitly represented through their ratio with the modular approach: all along the experiment presented in Figure~\ref{fig:mf_plotasymptotic}, for $10\, 000 \leq r \leq 100\, 000$, the modular approach is between $55$ and $90$ times faster. Based on a linear interpolation of the timings for the brute-force approach, and a polynomial interpolation of the modular reconstruction timings, we expect the modular reconstruction to become worse than brute-force for a number of primes $r$ bigger than $4.9$ million. This is due to both the proximity between persistence diagram and multi-field persistence diagram, and the use of only few arithmetic operations by the persistence implementation. 

As a consequence, the modular reconstruction algorithm remains substantially faster than brute force in topological data analysis, for medium to large $r$.

\subsection{Persistence of Torsion in Random Complexes}
\label{subsec:app}

In this section we study the persistence of torsion of Linial-Meshulam random complexes~\cite{Linial*2006}. A Linial-Meshulam 2-complex $Y_2(n,p)$, for an integer $n$ and a probability $0 \leq p \leq 1$, is a random abstract simplicial complex on $n$ vertices made of a complete $1$-skeleton, and where every triangle has been added to $Y(n,p)$ independently with probability $p$. The homology of these complexes have been extensively studied, and they are known to show a short ``burst of torsion'' for certain values of the parameter $p$, with the appearance of a torsion subgroup in homology of experimental super-exponential order~\cite{2017arXiv171005683K,Luczak2018}.

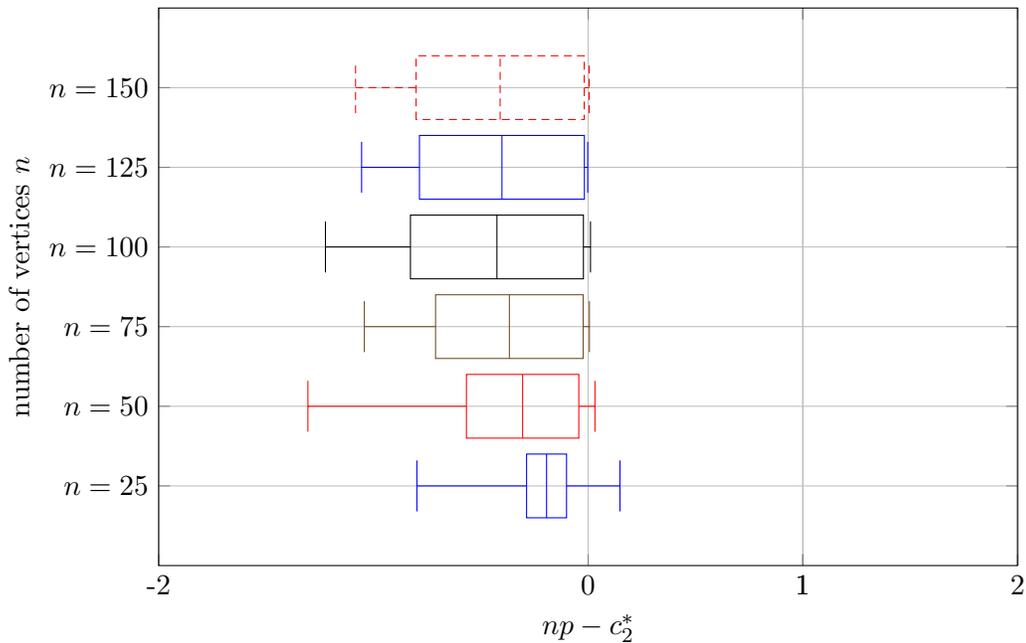
\begin{figure}[t]
\centering
  \begin{tikzpicture}
    \begin{axis} [
      scaled ticks=false,
      width =13cm, 
      height=9cm,
      xmin=-2,
      xmax=2,
      xtick = {-2,1,0,1,2},
      xticklabels = {-2,1,0,1,2},
      grid  = both,
      xlabel = $np -c_2^*$,
      ymin=0,
      ymax=7,
      ytick={1, 2, 3, 4, 5, 6},
      yticklabels={$n=25$, $n=50$, $n=75$, $n=100$, $n=125$, $n=150$},
      ylabel = number of vertices $n$
    ]
      \addplot+ [
        boxplot prepared={
          lower whisker= -0.797478, 
          lower quartile= -0.286609, 
          median= -0.193565,
          upper quartile= -0.100522, 
          upper whisker= 0.148174,
        },
      ] coordinates {}; 
      \addplot+ [
        boxplot prepared={
          lower whisker= -1.30502, 
          lower quartile= -0.566857, 
          median= -0.305122,
          upper quartile= -0.0433878, 
          upper whisker= 0.0317143,
        },
      ] coordinates {}; 
      \addplot+ [
        boxplot prepared={
          lower whisker= -1.04241, 
          lower quartile= -0.710535, 
          median= -0.366817,
          upper quartile= -0.0231003, 
          upper whisker= 0.00608886,
        },
      ] coordinates {}; 
      \addplot+ [
        boxplot prepared={
          lower whisker= -1.22339, 
          lower quartile= -0.827519, 
          median= -0.424736,
          upper quartile= -0.021953, 
          upper whisker= 0.0109969,
        },
      ] coordinates {}; 
      \addplot+ [
        boxplot prepared={
          lower whisker= -1.05494, 
          lower quartile= -0.785487, 
          median= -0.401467,
          upper quartile= -0.0174461, 
          upper whisker= -0.00183635,
        },
      ] coordinates {}; 
      \addplot+ [
        boxplot prepared={
          lower whisker= -1.08313, 
          lower quartile= -0.801343, 
          median= -0.409821,
          upper quartile= -0.0182995, 
          upper whisker= 0.00493343,
        },
      ] coordinates {}; 
    \end{axis}
  \end{tikzpicture}
  \caption{Range $np - c$ for which $H_{1}(Y_2(n,p), \Z)$ admits torsion summands $\Z/q^k \Z$, for $q$ one of the first $200$ prime numbers.}
  \label{fig:stach}
\end{figure}

However, the complete understanding of torsion in the homology of these complexes remains a difficult problem. \L uczak and Peled conjecture the following:
\begin{conjecture}[\L uczak, Peled~\cite{Luczak2018}]
\label{conj:rand}
For $p = p(n)$ such that $|np - c|$ is bounded away from $0$, $H_{1}(Y_2(n,m), \Z)$ is torsion-free asymptotically almost surely, where the constant $c$ is the phase transition constant $c = c_2^*$ of random 2-complexes (see~\cite{Linial2016}[Theorem 1.1]).
\end{conjecture}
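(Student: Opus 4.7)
The plan is to prove the conjecture by treating separately the supercritical regime $np > c_2^* + \epsilon$ and the subcritical regime $np < c_2^* - \epsilon$, since the mechanisms ruling out torsion are qualitatively different on the two sides of the phase transition. In both cases, the goal is to show that the integral boundary map $\partial_2 \colon C_2(Y_2(n,p),\Z) \to C_1(Y_2(n,p),\Z)$ has the property that the Smith normal form of its restriction to $Z_1$ has all elementary divisors equal to $1$, equivalently that $H_1(Y_2(n,p),\Z)$ is torsion-free.

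For the supercritical regime, the strategy is to prove the stronger statement $H_1(Y_2(n,p),\Z)=0$ a.a.s., which implies the conjecture in this range. By the Linial--Meshulam threshold (and its strengthenings), $H_1(Y_2(n,p),\Q)=0$ a.a.s.\ here, so one must upgrade rational vanishing to integral vanishing. The approach is a core-exposure argument analogous to the $k$-core analysis in random hypergraphs: expose triangles one at a time, and at each step either reduce the rank of the cycle group by the boundary of an exposed triangle (recorded with integer coefficient $\pm 1$) or declare the triangle "free." Since $np > c_2^* + \epsilon$ provides an excess of triangles beyond the rational-vanishing threshold, one can show a.a.s.\ that a spanning set of triangles is produced whose boundaries form a $\Z$-basis of $Z_1$. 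Since every basis element appears with coefficient $\pm 1$, no nontrivial elementary divisor can arise, so $H_1(Y_2(n,p),\Z)=0$.

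For the subcritical regime, $H_1(Y_2(n,p),\Q)$ is non-trivial, so one must show directly that the integer lift is free abelian. The structural input is a decomposition of $Y_2(n,p)$ into a sparse "tree-like" part, which is known to contribute only $\Z$-summands (since any simplicial complex whose $2$-skeleton admits elementary collapses away from a 1-complex has torsion-free $H_1$), and a denser "core." Inside the core, the proof proceeds by a first-moment argument ruling out every potential torsion-producing subcomplex $L$: one enumerates minimal combinatorial configurations whose boundary restriction has Smith normal form with some elementary divisor $q > 1$, shows that each such $L$ must have a fixed density excess (because torsion requires each generating cycle to be $q$-times a boundary, forcing at least $\vert \partial_2(C_2(L)) \vert$ extra triangles beyond a spanning tree of $Z_1(L)$), and then bounds the expected number of labelled copies of $L$ in $Y_2(n,p)$ by $n^{v(L)} p^{t(L)}$, which tends to zero whenever $np < c_2^* - \epsilon$ provided the density-excess inequality is strict.

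The main obstacle will be the enumeration and moment bounds in the subcritical regime: torsion-carriers can a priori be arbitrarily large, and naive union bounds over all vertex-sizes diverge. A careful classification of \emph{minimal} torsion-carriers via Smith normal form of their boundary matrices, together with an isoperimetric lemma quantifying how many triangles are needed to produce an elementary divisor $> 1$ on a configuration of $v$ vertices, will be required to make the $n^v p^t$ estimate summable. A secondary obstacle is the supercritical step of upgrading rational vanishing to integral vanishing: the core-exposure argument must be coupled with a careful accounting of signed boundary coefficients rather than mere connectivity, which is delicate near $c_2^*$ where the core has only marginal excess density. Finally, one must verify that the "bounded away from $c_2^*$" hypothesis is exactly what these two arguments need, so the proof fits the statement as given rather than leaving an unresolved window at the threshold.
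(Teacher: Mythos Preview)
The statement you are attempting to prove is labelled in the paper as a \emph{conjecture} (attributed to \L uczak and Peled), and the paper does \emph{not} contain a proof of it. The surrounding Section~\ref{subsec:app} presents only experimental evidence: the authors compute multi-field persistent homology of random Linial--Meshulam complexes $Y_2(n,m)$ for various $n$ and the first $200$ primes, and display in Figure~\ref{fig:stach} the empirical range of $np-c_2^*$ over which torsion is detected. This is offered as numerical support for the conjecture, not as a proof. There is therefore no ``paper's own proof'' to compare your proposal against.

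As for your proposal itself, what you have written is a strategy outline, not a proof, and you explicitly flag the two places where the argument is incomplete. Both gaps are genuine and, as far as I am aware, unresolved in the literature. In the subcritical regime, the first-moment bound over minimal torsion-carriers is exactly the difficulty: there is no known classification of minimal $2$-complexes carrying $q$-torsion in $H_1$ sharp enough to make $\sum_L n^{v(L)}p^{t(L)}$ summable uniformly in $q$, and the ``isoperimetric lemma'' you invoke does not presently exist in the required form. In the supercritical regime, upgrading rational acyclicity to integral acyclicity just above $c_2^*$ is likewise open; the known integral results (e.g.\ Hoffman--Kahle--Paquette) require $np$ well above the Linial--Meshulam threshold, not merely bounded away from $c_2^*$. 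Your core-exposure sketch does not explain how to guarantee that the exposed boundaries generate $Z_1$ over $\Z$ (rather than a finite-index sublattice) when the excess density is only a constant.

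In short: the paper states this as an open conjecture and provides computational evidence; your proposal is a reasonable high-level plan of attack but does not constitute a proof, and the obstacles you identify are the actual reasons the conjecture remains open.
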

In particular, the burst of torsion happens around $p = c_2^* / n$. 

For our experiments, we study the closely related random 2-complex $Y(n,m)$, where $m$ triangles are randomly picked and added to the complex. We use the persistent homology algorithm with torsion to study experimentally the size of the range around $m = c_2^* / n \cdot \binom{n}{3}$ for which $H_{1}(Y_2(n,m), \Z)$ has torsion, for an increasing number of vertices $n$. Our index-valued filtration on $Y(n,m)$ is induced by the random order with which the $m$ triangles are inserted in the complex (all vertices and edges have filtration value $0$). We compute the persistent homology using the modular reconstruction approach for the $r = 200$ first prime numbers.

Figure~\ref{fig:stach} illustrates the intervals of values of $m \in [0, m_{\mathrm{max}}]$ for which the homology of an instance $Y(n,m_{\mathrm{max}})$ contains torsion summands $\Z/q^k \Z$, for $q$ one of the first $200$ prime numbers, and for $n \in \{25,50,75,100,125,150\}$ and $25$ independent runs for each value of $n$. The boxes represent the normalized quantity 
\[
    n \cdot m \cdot \binom{n}{3}^{-1} - c_2^*
\]
to correspond to Conjecture~\ref{conj:rand}. The boxes represent the average lower bound and upper bound (and centre) of the intervals, and the whiskers stand for the extremal values observed in the samples.

Similarly to the study of homology with field coefficients~\cite{Linial2016}, we observe a one-sided sharp transition at $p = c_2^* / n$ for the disappearance of torsion. The plot seems to corroborate the convergence of a lower bound for the interval at a constant $k_2 \approx -0.8$, which suggests that, following Conjecture~\ref{conj:rand}, the homology group $H_{1}(Y_2(n,m), \Z)$ is torsion-free a.a.s. when $|np - c| > k_2 \approx -0.8$.

\section*{Acknowledgement}

The research leading to these results has received funding from the European Research Council (ERC) under  the  European  Union's  Seventh  Framework  Programme (FP/2007-2013)  /  ERC  Grant Agreement No. 339025 GUDHI (Algorithmic Foundations of Geometry Understanding in Higher Dimensions).

\subsubsection*{Conflict of interest.} On behalf of all authors, the corresponding author states that there is no conflict of interest. 

\bibliographystyle{plain}
\bibliography{bibliography}

\appendix

\section{Arithmetic Notations}

\begin{itemize}
\item $\Z$ ring of integers,
\item $\Z/n\Z$ ring of integers modulo $n \geq 2$,
\item $\Q$ field of rationals,
\item $q_1, \ldots, q_r$ the $r$ first prime numbers, for $r \geq 1$,
\item $[r]$ the set $\{1, \ldots , r\}$,
\item $Q := q_1 \times \ldots \times q_r$, product of first $r$ prime numbers,
\item $Q_S := \prod_{s \in S} q_s$, for a subset of indices $S \subset [r]$,
\item indices $s$, $t$, $r$, and set of indices $S$ and $T$, are reserved to the indexing of prime numbers $\{q_1, \ldots, q_r\}$,
\item indices $i$, $j$, $k$ and $m$ refer to indices in the filtration of a complex, and hence indices for matrix columns and matrix reduction algorithms. 
\end{itemize}

\end{document}